\documentclass{llncs}
\usepackage[normalem]{ulem}
\usepackage[utf8]{inputenc}
\usepackage{amsmath}
\usepackage{amsfonts}
\usepackage{amssymb}
\usepackage{latexsym}
\usepackage{gastex}
\usepackage{graphicx}
\usepackage{marvosym}
\usepackage{multirow}
\usepackage{comment}
\usepackage{changebar}
\usepackage{ulem}
\def\fract#1/#2{\leavevmode
 \kern.1em \raise .5ex \hbox{\the\scriptfont0 #1}%
 \kern-.1em $/$%
 \kern-.15em \lower .25ex \hbox{\the\scriptfont0 #2}%
}
\def\abs#1{\ensuremath{\lvert #1\rvert}} 

\makeatletter
\DeclareRobustCommand\sfrac[1]{\@ifnextchar/{\@sfrac{#1}}%
                                            {\@sfrac{#1}/}}
\def\@sfrac#1/#2{\leavevmode\scalebox{.9}{\kern.1em\raise.5ex
         \hbox{$\m@th\mbox{\fontsize\sf@size\z@
                           \selectfont#1}$}\kern-.1em
         /\kern-.15em\lower.25ex
          \hbox{$\m@th\mbox{\fontsize\sf@size\z@
                            \selectfont#2}$}}}
\DeclareRobustCommand\numfrac[1]{\@ifnextchar/{\@numfrac{#1}}%
                                            {\@numfrac{#1}}}
\def\@numfrac#1{\leavevmode \hbox{$\m@th\mbox{\fontsize\sf@size\z@
                           \selectfont#1}$}}
\makeatother
\newcommand{\nat}{\mathbb N}
\newcommand{\tuple}[1]{\langle #1 \rangle}
\newcommand{\dist}{{\cal D}}

\newcommand{\q}{\hat{ q}}
\newcommand{\p}{\hat{ p}}
\newcommand{\M}{{\cal M}}
\newcommand{\N}{{\cal N}}
\newcommand{\B}{{\cal B}}

\newcommand{\A}{{\cal A}}
\newcommand{\F}{{\cal F}}
\newcommand{\LL}{{\cal L}}

\newcommand{\Supp}{{\sf Supp}}
\newcommand{\Pref}{{\sf Pref}}
\newcommand{\Plays}{{\sf Play}}

\newcommand{\Last}{{\sf Last}}
\newcommand{\sink}{{\sf sink}}
\newcommand{\Outcomes}{\mathit{Outcomes}}
\newcommand{\fsum}{\mathit{sum}}
\newcommand{\fmax}{\mathit{max}}
\newcommand{\win}[2]{\langle \! \langle 1 \rangle \! \rangle_{\mathit{#2}}^{\mathit{#1}}}
\newcommand{\winsure}[1]{\langle \! \langle 1 \rangle \! \rangle_{\mathit{sure}}^{\mathit{#1}}}
\newcommand{\winas}[1]{\langle \! \langle 1 \rangle \! \rangle_{\mathit{almost}}^{\mathit{#1}}}
\newcommand{\winlim}[1]{\langle \! \langle 1 \rangle \! \rangle_{\mathit{limit}}^{\mathit{#1}}}
\newcommand{\Bool}{{\sf B}^+}
\newcommand{\true}{{\sf true}}
\newcommand{\false}{{\sf false}}
\newcommand{\init}{{\sf init}}
\newcommand{\Act}{{\sf A}}
\newcommand{\Pre}{{\sf Pre}}
\newcommand{\post}{{\sf post}}
\newcommand{\mem}{{\sf Mem}}

\title{Limit Synchronization in \\ Markov Decision Processes\thanks{This work 
has been partly supported by the Belgian Fonds National de la Recherche 
Scientifique (FNRS).}}
\author{Laurent Doyen \inst{1}
\and Thierry Massart \inst{2} \and
Mahsa Shirmohammadi \inst{1,2} }

\institute{LSV, ENS Cachan \& CNRS, France 
\and Universit\'e Libre de Bruxelles, Belgium 
}

\begin{document}
\sloppy

\maketitle 

\pagestyle{plain}
\begin{abstract}
Markov decision processes (MDP) are finite-state systems with both
strategic and probabilistic choices. After fixing a strategy, an MDP
produces a sequence of probability distributions over states. 
The
  sequence is eventually synchronizing if the probability mass accumulates in a
  single state, possibly in the limit. Precisely, for $0 \leq p \leq 1$ the
  sequence is $p$-synchronizing if a probability distribution in the sequence  
  assigns probability at least~$p$ to some state, and we
 distinguish three synchronization modes:
$(i)$ \emph{sure} winning if there exists a strategy that produces a $1$-synchronizing sequence; 
$(ii)$ \emph{almost-sure} winning if there exists a strategy that produces a sequence that is, 
for all $\epsilon > 0$, a (1-$\epsilon$)-synchronizing sequence; 
$(iii)$ \emph{limit-sure} winning if for all $\epsilon > 0$, there exists a strategy that
produces a (1-$\epsilon$)-synchronizing sequence.
We consider the problem of deciding whether an MDP is sure,
almost-sure, or limit-sure winning, and we establish the decidability and
optimal complexity for all modes, as well as the memory requirements
for winning strategies.  Our main contributions are as follows: 
(a) for each winning modes we present characterizations
that give a PSPACE complexity for the decision problems, 
and we establish matching PSPACE lower bounds; 
(b) we show that for sure winning strategies, exponential memory is sufficient and may be necessary,
and that in general infinite memory is necessary for almost-sure winning,
and unbounded memory is necessary for limit-sure winning;
(c) along with our results, we establish new complexity results for
alternating finite automata over a one-letter alphabet.
\end{abstract}
\section{Introduction}

Markov decision processes (MDP) are finite-state stochastic processes
used in the design of systems that exhibit both controllable and
stochastic behavior, such as in planning, randomized algorithms, and
communication protocols~\cite{AspnesH90,FokkinkP06,BBS06}. 
The controllable choices along the execution are fixed by a strategy,
and the stochastic choices describe the system response.
When a strategy is fixed in an MDP, the \emph{symbolic} semantics 
is a sequence of probability distributions over states of the MDP, which
differs from the \emph{traditional} semantics where a probability
measure is considered over sets of sequences of states. 
This semantics is adequate in many
applications, such as systems biology, sensor networks, robot
planning, etc.~\cite{HMW09,BBMR08}, where the system
consists of several copies of the same process (molecules, sensors,
robots, etc.), and the relevant information along the execution of the
system is 
the number of processes in each state, or the relative frequency (i.e., the probability)
of each state.
In recent works, the verification of quantitative properties of the symbolic
semantics was shown undecidable~\cite{KVAK10}.
Decidability is obtained for special subclasses~\cite{CKVAK11}, 
or through approximations~\cite{AAGT12}.

In this paper, we consider a general class of strategies that select
actions depending on the full history of the system execution.  In the
context of several identical processes, the same strategy is used in
every process, but the internal state of each process need not be the
same along the execution, since probabilistic transitions may have
different outcome in each process. Therefore, the execution of the
system is best described by the sequence of probability distributions
over states along the execution.  Previously, the special case of
word-strategies have been considered, that at each step select the
same control action in all states, and thus only depend on the number
of execution steps of the system.  Several problems for MDPs with
word-strategies (also known as probabilistic automata) are
undecidable~\cite{BBG08,GO10,KVAK10,DMS11Err}. In particular the limit-sure
reachability problem, which is to decide whether a given state can be
reached with probability arbitrarily close to one, is undecidable for
probabilistic automata~\cite{GO10}.


We establish the decidability and optimal complexity of deciding
synchronizing properties for the symbolic semantics of MDPs under
general strategies.  Synchronizing properties require that the
probability distributions tend to accumulate all the probability mass
in a single state, or in a set of states. They generalize
synchronizing properties of finite automata~\cite{Volkov08,DMS11b}.
Formally for $0  \leq p \leq 1$, a sequence $\bar{X} = X_0
X_1 \dots$ of probability distributions $X_i : Q \to [0,1]$ over state
space~$Q$ of an MDP is \emph{eventually $p$-synchronizing} if for some $i \geq
0$, the distribution $X_i$ assigns probability at least~$p$ to some
state. Analogously, it is \emph{always $p$-synchronizing} if in all
distributions $X_i$, there is a state with probability at least~$p$.
For $p=1$, these definitions are the qualitative analogous for
sequences of distributions of the traditional reachability and safety
conditions~\cite{AHK07}. In particular, an eventually $1$-synchronizing
sequence witnesses that there is a length $\ell$ such that all paths 
of length $\ell$ in the MDP reach a single state, which is thus reached 
synchronously no matter the probabilistic choices.


Viewing MDPs as one-player stochastic games, we consider the following
traditional winning modes (see also Table~\ref{tab:def-modes}): $(i)$
\emph{sure} winning, if there is a strategy that generates an
\{eventually, always\} $1$-synchronizing sequence; $(ii)$
\emph{almost-sure} winning, if 
there exists a strategy that generates a
sequence that is, for all $\epsilon>0$, \{eventually, always\}
$(1-\epsilon)$-synchronizing; $(iii)$ \emph{limit-sure} winning, if
for all $\epsilon>0$, there is a strategy that generates an
\{eventually, always\} $(1-\epsilon)$-synchronizing sequence.


We show that the three winning modes form a strict hierarchy for
eventually synchronizing: there are limit-sure winning
MDPs that are not almost-sure winning, and 
there are almost-sure winning MDPs that are not sure winning. 
For always synchronizing, the three modes coincide.



For each winning mode, we consider the problem of deciding if a given
initial distribution is winning.
We establish the decidability and optimal complexity bounds for all
winning modes.  Under general strategies, the decision problems have
much lower complexity than with word-strategies. We show that all
decision problems are decidable, in polynomial time for always
synchronizing, and PSPACE-complete for eventually synchronizing.  This
is also in contrast with almost-sure winning in the traditional
semantics of MDPs, which is solvable in polynomial time for both
safety and reachability~\cite{deAlfaro97}.
All complexity results are shown in Table~\ref{tab:complex-strat}.

We complete the picture by providing optimal memory bounds for 
winning strategies. We show that for sure winning strategies, 
exponential memory is sufficient and may be necessary,
and that in general infinite memory is necessary for almost-sure winning,
and unbounded memory is necessary for limit-sure winning.

\begin{table}[!t]
\begin{center}
\scalebox{0.85}{
\begin{tabular}{|l@{\ }c@{\;} |*{6}{cc@{\;}c|}}
\hline                        
 \large{\strut} &  & \multicolumn{3}{|c|}{Always} & \multicolumn{3}{|c|}{Eventually} \\
\hline
 Sure \large{\strut}	&  
			& $\exists \alpha$ & $\forall n$                    & $\M^{\alpha}_n(T)=1$ 
        	        & $\exists \alpha$ & $\exists n$                    & $\M^{\alpha}_n(T)=1$ 
 \\
\hline
 Almost-sure \  \large{\strut}	&  
				& $\exists \alpha$ &$\inf_{n}$                 & $\M^{\alpha}_n(T)=1$
				& $\exists \alpha$& $\sup_{n}$                 & $\M^{\alpha}_n(T)=1$ 
 \\
\hline
 Limit-sure \large{\strut}   &   
		& $\sup_{\alpha}$ &$\inf_{n}$                & $\M^{\alpha}_n(T)=1$
		& $\sup_{\alpha}$& $\sup_{n}$                & $\M^{\alpha}_n(T)=1$  \\
\hline
\end{tabular}  
}
\end{center}
\caption{Winning modes and synchronizing objectives 
(where $\M^{\alpha}_n(T)$ denotes the probability that under strategy $\alpha$,
after $n$ steps the MDP $\M$ is in a state of $T$). \label{tab:def-modes}}
\end{table}

Some results in this paper rely on insights related to games and 
alternating automata that are of independent interest. 
First, the sure-winning problem for eventually synchronizing is equivalent
to a two-player game with a synchronized reachability objective, where the goal 
for the first player is to ensure that a target state is reached 
after a number of steps
%
%
that is independent of the strategy of the opponent (and thus this number can be fixed in 
advance by the first player). This condition is stronger than plain reachability,
and while the winner in two-player reachability games can be decided in polynomial time,
deciding the winner for synchronized reachability is PSPACE-complete. This result
is obtained by turning the synchronized reachability game into a one-letter alternating
automaton for which the emptiness problem (i.e., deciding if there exists a word accepted
by the automaton) is PSPACE-complete~\cite{Holzer95,AFA1}.
Second, our PSPACE lower bound for the limit-sure winning problem in eventually synchronizing
uses a PSPACE-completeness result that we establish for the \emph{universal finiteness problem},
which is to decide, given a one-letter alternating automata, whether from every 
state the accepted language is finite.


\section{Markov Decision Processes and Synchronization}\label{sec:def}

A \emph{probability distribution} over a finite set~$S$ is a
function $d : S \to [0, 1]$ such that $\sum_{s \in S} d(s)= 1$. 
The \emph{support} of~$d$ is the set $\Supp(d) = \{s \in S \mid d(s) > 0\}$. 
We denote by $\dist(S)$ the set of all probability distributions over~$S$. 
Given a set $T\subseteq S$, let $d(T)=\sum_{s\in T}d(s)$.
For $T \neq \emptyset$, the \emph{uniform distribution} on $T$ assigns probability 
$\frac{1}{\abs{T}}$ to every state in $T$.
Given $s \in S$, the \emph{Dirac distribution} on~$s$ assigns probability~$1$
to~$s$, and by a slight abuse of notation, we denote it simply by $s$.

\subsection{Markov decision processes}

A \emph{Markov decision process} (MDP) $\M = \tuple{Q,\Act,\delta}$ 
consists of a finite set $Q$ of states,  a finite set $\Act$ of actions, 
and a probabilistic transition function $\delta: Q \times \Act \to \dist(Q)$.   
A state $q$ is \emph{absorbing} if $\delta(q,a)$ is the Dirac distribution on~$q$
for all actions~$a \in \Act$.


We describe the behavior of an MDP as a one-player stochastic game
played for infinitely many rounds. 
Given an initial distribution $\mu_0 \in \dist(Q)$, the game 
starts in the first round in state $q$ with probability $\mu_{0}(q)$.
In each round, the player chooses an action $a \in \Act$,
and if the game is in state $q$, the next round starts in the 
successor state $q'$ with probability $\delta(q,a)(q')$.


Given $q\in Q$ and $a \in \Act$, denote by $\post(q,a)$ the set
$\Supp(\delta(q,a))$, and given $T \subseteq Q$ let $\Pre(T)= \{q \in
Q \mid \exists a \in \Act: \post(q, a) \subseteq T\}$ be the set of
states from which the player has an action to ensure that the
successor state is in $T$.  For $k>0$, let $\Pre^k(T) =
\Pre(\Pre^{k-1}(T))$ with $\Pre^0(T)=T$.

A \emph{path} in $\M$ is an infinite sequence $\pi = q_{0} a_{0} q_{1} a_1 \dots$
such that $q_{i+1} \in \post(q_{i},a_{i})$ 
for all $i \geq 0$. A finite prefix $\rho = q_{0} a_{0} q_{1} a_1 \dots q_n$
of a path has length $\abs{\rho}=n$ and last state $\Last(\rho)=q_{n}$.
We denote by $\Plays(\M)$ and $\Pref(\M)$ the set of all
paths and finite paths in~$\M$ respectively.

For the decision problems considered in this paper, only the support
of the probability distributions in the transition function is
relevant (i.e., the exact value of the positive probabilities does not
matter); therefore, we can encode an MDP as an $\Act$-labelled
transition system $(Q,R)$ with $R \subseteq Q \times \Act \times Q$
such that $(q,a,q') \in R$ is a transition if $q' \in \post(q,a)$.

 



\paragraph{Strategies.}
A \textit{randomized strategy} for $\M$ (or simply a strategy) is a function 
$\alpha: \Pref(\M) \to \dist(\Act)$ that, given a finite path $\rho$,
returns a probability distribution $\alpha(\rho)$ over the action set,
used to select a successor state $q'$ of $\rho$ with probability 
$\sum_{a \in \Act} \alpha(\rho)(a) \cdot \delta(q,a)(q')$ where $q=\Last(\rho)$.

A strategy $\alpha$ is
\emph{pure} if for all $\rho \in \Pref(\M)$, 
there exists an action $a \in \Act$ such that $\alpha(\rho)(a)=1$; and 
\emph{memoryless} if $\alpha(\rho) = \alpha(\rho')$
for all $\rho, \rho'$ such that $\Last(\rho) = \Last(\rho')$.
We view pure strategies as functions $\alpha: \Pref(\M)\to \Act$, and 
memoryless strategies as functions $\alpha: Q \to \dist(\Act)$,
Finally, a strategy $\alpha$ uses \emph{finite-memory} if it can be represented 
by a finite-state transducer $T = \tuple{\mem,m_0, \alpha_u, \alpha_n}$ where
$\mem$ is a finite set of modes (the memory of the strategy), 
$m_0 \in \mem$ is the initial mode,
$\alpha_u: \mem \times (\Act \times Q) \to \mem$ is an update function, 
that given the current memory, last action and state updates the memory,
and $\alpha_n: \mem \times Q \to \dist(\Act)$ is a next-move function
that selects the probability distribution $\alpha_n(m,q)$ over
actions when the current mode is $m$ and the current state of $\M$ is $q$.
For pure strategies, we assume that $\alpha_n: \mem \times Q \to \Act$.
The \emph{memory size} of the strategy is the number $\abs{\mem}$ of modes.
For a finite-memory strategy $\alpha$, let $\M(\alpha)$ be the Markov chain
obtained as the product of $\M$ with the transducer defining $\alpha$.
We assume general knowledge of the reader about Markov chains, 
such as recurrent and transient states, periodicity, and 
stationary distributions~\cite{PK11}.

\subsection{Traditional semantics}

In the traditional semantics, given an initial distribution $\mu_0 \in \dist(Q)$ 
and a strategy $\alpha$ in an MDP $\M$,
a \emph{path-outcome} is a path
$\pi = q_{0}  a_{0} q_{1} a_1 \dots$ in $\M$
such that $q_0 \in \Supp(\mu_0)$
and $a_{i} \in \Supp(\alpha(q_0 a_0 \dots q_{i}))$ 
for all $i \geq 0$. The probability of a finite prefix 
$\rho = q_{0} a_{0} q_{1} a_1 \dots q_n$ of $\pi$ is 
$$\mu_0(q_{0}) \cdot \prod_{j=0}^{n-1} \alpha(q_{0} a_{0}\dots q_{j})(a_{j}) \cdot
\delta(q_{j},a_{j})(q_{j+1}).$$
We denote by $\Outcomes(\mu_0,\alpha)$ the set of all path-outcomes from $\mu_0$ under strategy $\alpha$.
An \emph{event} $\Omega \subseteq \Plays(\M)$ is a measurable set of paths, and given an initial distribution 
$\mu_0$ and a strategy $\alpha$, the probabilities $Pr^{\alpha}(\Omega)$ of events
$\Omega$ are uniquely defined~\cite{Vardi-focs85}.
In particular, given a set $T \subseteq Q$ of target states, and $k \in \nat$, 
we denote by 
$\Box T = \{q_{0} a_{0} q_{1} \dots \in \Plays(\M) \mid \forall i: q_i \in T\}$
the safety event of always staying in $T$, by
$\Diamond T = \{q_{0} a_{0} q_{1} \dots \in \Plays(\M) \mid \exists i: q_i \in T\}$
the event of reaching $T$, and by
$\Diamond^{k}\, T = \{q_{0} a_{0} q_{1} \dots \in \Plays(\M) \mid q_k \in T \}$ the event
of reaching $T$ after exactly $k$ steps.
Hence, $\Pr^{\alpha}(\Diamond T)$ is the probability to reach $T$ under strategy $\alpha$.

We consider the following classical winning modes. 
Given an initial distribution $\mu_0$ and an event $\Omega$, we say that $\M$ is:
\begin{itemize}
\item \emph{sure winning} if there exists a strategy $\alpha$ such that
$\Outcomes(\mu_0,\alpha) \subseteq \Omega$;
\item \emph{almost-sure winning} if there exists a strategy $\alpha$ such that $\Pr^{\alpha}(\Omega) = 1$;
\item \emph{limit-sure winning} if $\sup_{\alpha} \Pr^{\alpha}(\Omega) = 1$.
\end{itemize}

It is known for safety
objectives $\Box T$ in MDPs that the three winning modes coincide, 
and for reachability objectives $\Diamond T$ that an MDP is  
almost-sure winning if and only if it is limit-sure winning. 
For both objectives, the set of initial distributions for which an MDP 
is sure (resp., almost-sure or limit-sure) winning can 
be computed in polynomial time~\cite{deAlfaro97}.

\subsection{Symbolic semantics}

In contrast to the traditional semantics, we consider a symbolic semantics
where fixing a strategy in an MDP $\M$ produces a sequence of probability 
distributions over states defined as follows~\cite{KVAK10}. 
Given an initial distribution $\mu_0 \in \dist(Q)$ and a strategy $\alpha$ in $\M$,
the \emph{symbolic outcome} of $\M$ from $\mu_0$ is 
the sequence $(\M^{\alpha}_n)_{n\in \nat}$ of probability distributions defined by 
$\M^{\alpha}_k(q) = Pr^{\alpha}(\Diamond^{k}\, \{q\})$ for all $k \geq 0$ and $q \in Q$.
Hence, $\M^{\alpha}_k$ is the probability distribution over states after $k$ steps
under strategy $\alpha$.
Note that $\M^{\alpha}_0 = \mu_0$.

Informally, synchronizing objectives require that the probability of some state 
(or some group of states) tends to $1$ in the sequence $(\M^{\alpha}_n)_{n\in \nat}$. 
Given a set $T \subseteq Q$, consider the functions $\fsum_T: \dist(Q) \to [0,1]$ 
and $\fmax_T: \dist(Q) \to [0,1]$ that compute $\fsum_T(X) = \sum_{q \in T} X(q)$
and $\fmax_T(X) = \max_{q \in T} X(q)$. 
For $f \in \{\fsum_T,\fmax_T\}$ and $p \in [0,1]$, 
we say that a probability distribution $X$ is $p$-synchronized according to $f$ if $f(X) \geq p$,
and that a sequence $\bar{X} = X_0 X_1 \dots$ of
probability distributions is:

\begin{itemize}
\item[$(a)$] \emph{always $p$-synchronizing} if $X_i$ is $p$-synchronized for all $i \geq 0$;
\item[$(b)$] \emph{eventually $p$-synchronizing} if $X_i$ is $p$-synchronized for some $i \geq 0$.
\end{itemize}
For $p=1$, we view these definitions as the qualitative analogous 
for sequences of distributions of the traditional safety and reachability
conditions for sequences of states~\cite{AHK07}. Now, 
we define the following winning modes.
Given an initial distribution $\mu_0$ and a function $f \in \{\fsum_T,\fmax_T\}$, 
we say that for the objective of \{always, eventually\} synchronizing, $\M$ is:
\begin{itemize}
\item \emph{sure winning} if there exists a strategy $\alpha$ such that
the symbolic outcome of $\alpha$ from $\mu_0$ 
is \{always, eventually\} $1$-synchronizing according to~$f$;
\item \emph{almost-sure winning} if there exists a strategy $\alpha$ such that 
for all $\epsilon>0$ the symbolic outcome of $\alpha$ from $\mu_0$ is 
\{always, eventually\} $(1-\epsilon)$-synchronizing according to $f$;
\item \emph{limit-sure winning} if for all $\epsilon>0$, there exists a strategy $\alpha$
such that the symbolic outcome of $\alpha$ from $\mu_0$ is 
\{always, eventually\} $(1-\epsilon)$-synchronizing according to $f$;
\end{itemize}

We often use $X(T)$ instead of $\fsum_T(X)$, as 
in Table~\ref{tab:def-modes} where 
the definitions of the various winning modes and synchronizing objectives
for $f=\fsum_T$ are summarized. In Section~\ref{sec:decision-problems},
we present an example to illustrate the definitions. 

\subsection{Decision problems}\label{sec:decision-problems}
For $f \in \{\fsum_T,\fmax_T\}$ and $\lambda \in \text{\{always, event(ually)\}}$, 
the \emph{winning region} $\winsure{\lambda}(f)$ is the set of initial distributions such that
$\M$ is sure winning for $\lambda$-synchronizing (we assume that $\M$ is clear from the context). 
We define analogously the winning regions $\winas{\lambda}(f)$ and $\winlim{\lambda}(f)$.
For a singleton $T = \{q\}$ we have $\fsum_{T} = \fmax_{T}$,
and we simply write $\win{\lambda}{\mu}(q)$ (where $\mu \in \text{\{sure, almost, limit\}}$).
We are interested in the algorithmic complexity of the \emph{membership problem}, 
which is to decide,
given a probability distribution $\mu_0$, whether $\mu_0 \in \win{\lambda}{\mu}(f)$.
As we show below, it is easy to establish the complexity of the 
membership problems for always synchronizing, while it is more tricky 
for eventually synchronizing. The complexity results are summarized
in Table~\ref{tab:complex-strat}.

\begin{table}[t]
\begin{center}
\scalebox{0.90}{
\begin{tabular}{|l@{\ }c@{\;} |*{2}{c|c|}}
\hline                        
\large{\strut}           &  & \multicolumn{2}{|c|}{Always} & \multicolumn{2}{|c|}{Eventually} \\
\cline{3-6}
\large{\strut}	&
						&								\;Complexity\;	& \;Memory requirement\;        &       \;Complexity\;	& \;Memory requirement\;  \\
\hline
 Sure \large{\strut}       			&
		&  													& 
                &           PSPACE-C    &  exponential
	 \\
\cline{1-2} \cline{5-6}
 Almost-sure \large{\strut}  				& 
		&          							PTIME		& memoryless
		& 						          PSPACE-C     & infinite
 \\
\cline{1-2} \cline{5-6}
 Limit-sure \large{\strut}   & 
		& 						              &  
		& 					            PSPACE-C    &  unbounded \\
\hline
\end{tabular}  
}
\end{center}
\caption{Computational complexity of the membership problem, 
and memory requirement for the strategies (for always synchronizing, the three 
modes coincide).\label{tab:complex-strat}}
\end{table}

\paragraph{Always synchronizing.}
We first remark that for always synchronizing, the three winning modes coincide.


\begin{lemma}\label{lem:always}
Let $T$ be a set of states. For all functions $f \in \{\fmax_T, \fsum_T\}$,  
we have
$\winsure{always}(f) = \winas{always}(f) = \winlim{always}(f)$.
\end{lemma}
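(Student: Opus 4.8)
The plan is to prove the chain of inclusions by sandwiching, using the trivial implications between winning modes together with a topological compactness argument that upgrades the limit-sure mode back to the sure mode for the \emph{always} objective. The two easy inclusions are $\winsure{always}(f) \subseteq \winas{always}(f) \subseteq \winlim{always}(f)$: a sure winning strategy witnesses a sequence that is always $1$-synchronizing, hence always $(1-\epsilon)$-synchronizing for every $\epsilon$, so the same strategy witnesses almost-sure winning; and an almost-sure winning strategy (a single $\alpha$ that works for all $\epsilon$) is in particular, for each fixed $\epsilon$, a witness strategy for that $\epsilon$, giving limit-sure winning. So the whole content of the lemma is the reverse inclusion $\winlim{always}(f) \subseteq \winsure{always}(f)$.

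For that direction, I would fix $\mu_0 \in \winlim{always}(f)$ and aim to produce a single sure strategy. The key structural observation is that the always-synchronizing condition is a \emph{safety}-type condition on distributions: being always $p$-synchronizing means every distribution in the sequence has $f$-value at least $p$, and the \emph{first} distribution is already $\mu_0$ itself, so a necessary condition is $f(\mu_0) = 1$ in the limit, i.e. $f(\mu_0) \geq 1-\epsilon$ for all $\epsilon$, forcing $f(\mu_0) = 1$. More usefully, since only the \emph{support} of the transition function matters (as noted in the encoding of $\M$ as a labelled transition system), I expect the always-synchronizing requirement to be expressible purely in terms of which states carry positive probability. Concretely, for $\fsum_T$ the condition $f(X) = 1$ means $\Supp(X) \subseteq T$, and maintaining this for every step is exactly a safety objective $\Box T$ in the support sense; for $\fmax_T$ the condition $f(X)=1$ forces $X$ to be Dirac on a single state of $T$. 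I would show that if there is a strategy keeping the support inside the relevant set with probability arbitrarily close to $1$ at every step, then in fact there is one keeping it there with probability exactly $1$, i.e. a sure strategy. This is where I would invoke the known collapse of winning modes for \emph{safety} objectives $\Box T$ in the traditional semantics, stated in the excerpt (``for safety objectives $\Box T$ in MDPs the three winning modes coincide'').

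The translation step is therefore the heart of the argument: I would reduce each synchronizing mode for the \emph{always} objective to the corresponding traditional mode for a safety objective, then apply the classical collapse. For $f = \fsum_T$, I claim $\win{always}{\mu}(\fsum_T)$ (for each $\mu \in \{\text{sure, almost, limit}\}$) coincides with the set of distributions from which $\M$ is $\mu$-winning for the traditional safety event $\Box T$. The direction establishing that always $(1-\epsilon)$-synchronizing for all $\epsilon$ corresponds to $\Pr^\alpha(\Box T)$ close to $1$ uses $\M^\alpha_n(T) = \Pr^\alpha(\Diamond^n T)$ and the fact that $\inf_n \Pr^\alpha(\Diamond^n T)$ relates to $\Pr^\alpha(\Box T)$ via a union-bound / monotone limit over finite prefixes. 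For $f = \fmax_T$ the target must be a single state held synchronously, so I would reduce instead to safety with respect to the set of states that can be kept as a persistent Dirac-like core, or simply handle $\fmax_T$ by the singleton reduction $T = \{q\}$ combined with a maximization over $q \in T$.

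The main obstacle I anticipate is the $\fmax_T$ case and the precise passage from ``$(1-\epsilon)$-synchronizing for all $\epsilon$, for all $n$'' to an exact safety statement. For $\fsum_T$ the correspondence with $\Box T$ is clean, but for $\fmax_T$ the requirement that a \emph{single} state carries mass $\geq 1-\epsilon$ \emph{at every step simultaneously} need not fix the \emph{same} state across steps, so I must be careful that the limit-sure condition still collapses to a sure condition; I expect this to force the probability mass onto an absorbing or recurrent single state and to reduce, after identifying candidate target states via a $\Pre$-based fixpoint, to the traditional safety collapse. Once the reductions to $\Box T$ are in place for both $f$, the equality of the three regions follows immediately from the cited classical result for safety, closing the reverse inclusion and hence the lemma.
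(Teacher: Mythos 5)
Your decomposition is fine (the forward inclusions are immediate; in fact sure and almost-sure coincide here by definition, since $\inf_n \M^{\alpha}_n(T)=1$ iff $\M^{\alpha}_n(T)=1$ for all $n$), so everything rests on $\winlim{always}(f) \subseteq \winsure{always}(f)$ --- and there your central step fails. You propose to pass from ``$\alpha$ is always $(1-\epsilon)$-synchronizing'' to ``$\Pr^{\alpha}(\Box T)$ is close to $1$'' via a union bound / monotone limit over finite prefixes. No bound of this kind exists: the correct inequality goes the other way, $\Pr^{\alpha}(\Box T) \leq \inf_n \M^{\alpha}_n(T)$, and the gap between the two sides can be maximal. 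Take the one-action MDP that deterministically rotates a cycle of $N$ states, let $T$ be all states but one, and let $\mu_0$ be uniform: then $\M^{\alpha}_n(T) = 1-\frac{1}{N}$ for every $n$, yet every path leaves $T$ within $N$ steps, so $\Pr^{\alpha}(\Box T)=0$. The mass sitting in $T$ at each step is not carried by paths that stay in $T$; it rotates out of $T$ and back, staggered in time. So from ``always $(1-\epsilon)$-synchronizing'' you can conclude nothing positive about the safety probability of that same strategy. The region-level statement you are after (that $\winlim{always}(\fsum_T)$ equals the traditional limit-sure, equivalently sure, winning region for $\Box T$) is true, but in the paper it is a \emph{corollary} of Lemma~\ref{lem:always}, derived from the sure-winning characterization; using it as a tool to prove the lemma is circular unless you give an independent proof of exactly the hard implication, which the union bound does not provide.

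What the paper does instead --- and what your proposal is missing --- is a granularity argument exploiting finiteness of the MDP: fix $\epsilon$ strictly smaller than every positive probability occurring in $\mu_0$ and in the transition function $\delta$. Then a single always-$(1-\epsilon)$-synchronizing strategy is forced into exact structure: for $\fmax_T$ the distributions must be Dirac and move along probability-$1$ transitions inside $T$, and for $\fsum_T$ the supports must stay inside $T$; since there are finitely many states (resp.\ supports), some state (resp.\ support) repeats, and the resulting lasso defines a sure winning strategy. No appeal to the traditional-safety collapse is needed. This also resolves the difficulty you flagged for $\fmax_T$: the witness state is genuinely allowed to change from step to step, which is why your fallback singleton reduction is wrong for always synchronizing --- with deterministic transitions $t_1 \to t_2 \to t_1$ and $T=\{t_1,t_2\}$, the Dirac distribution on $t_1$ belongs to $\winsure{always}(\fmax_T)$ but to neither $\winsure{always}(\fmax_{\{t_1\}})$ nor $\winsure{always}(\fmax_{\{t_2\}})$ (the paper's union-over-singletons identity is stated for eventually synchronizing only). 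The lasso through $T$ is precisely what accommodates the rotating witness.
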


\begin{proof}
It follows from the definition of winning modes that
$\winsure{always}(f) \subseteq \winas{always}(f) \subseteq \winlim{always}(f)$.
Hence it suffices to show that $\winlim{always}(f) \subseteq \winsure{always}(f)$,
that is for all $\mu_0$, if $\M$ is limit-sure always synchronizing from $\mu_0$, then 
$\M$ is sure always synchronizing from $\mu_0$. 
For $f=\fmax_T$, consider $\epsilon$ smaller than the
smallest positive probability in the initial distribution $\mu_0$ and in the 
transitions of the MDP $\M = \tuple{Q,\Act,\delta}$.
Then, given an always $(1-\epsilon)$-synchronizing 
strategy, it is easy to show by induction on $k$ 
that the distributions $\M^{\alpha}_k$ are Dirac for all $k \geq 0$. 
In particular $\mu_0$ is Dirac, and let $q_0 \in T$ be such that $\mu_0(q_0)=1$.
It follows that there is an infinite path from $q_0$
in the graph $\tuple{T,E}$ where $(q,q') \in E$ if there exists an action $a \in \Act$
such that $\delta(q,a)(q') = 1$. The existence of this
path entails that there is a loop reachable from 
$q_0$ in the graph $\tuple{T,E}$, and this naturally defines a sure-winning 
always synchronizing strategy in $\M$.
A similar argument for $f=\fsum_T$ shows that for sufficiently  small 
$\epsilon$, an always $(1-\epsilon)$-synchronizing strategy~$\alpha$ must 
produce a sequence of distributions with support contained in $T$, until
some support repeats in the sequence. This naturally induces 
an always $1$-synchronizing strategy. 
\qed
\end{proof}

It follows from the proof of Lemma~\ref{lem:always} that the winning region for  
always synchronizing according to $\fsum_T$ coincides with the set of winning initial 
distributions for the safety objective $\Box T$ in the traditional semantics,
which can be computed in polynomial time~\cite{CH12}. 
Moreover, always synchronizing according to $\fmax_T$ 
is equivalent to the existence of an infinite path staying in $T$
in the transition system $\tuple{Q,R}$ of the MDP restricted 
to transitions $(q,a,q') \in R$ such that $\delta(q,a)(q')=1$,
which can also be decided in polynomial time. 
In both cases, pure memoryless strategies are sufficient.


\begin{theorem}\label{theo:always}
The membership problem for always synchronizing can be solved in polynomial
time, and pure memoryless strategies are sufficient.
\end{theorem}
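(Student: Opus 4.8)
The plan is to prove Theorem~\ref{theo:always} by leveraging
Lemma~\ref{lem:always}, which already collapses the three winning modes
for always synchronizing into the single mode $\winsure{always}(f)$. Thus
I only need to decide sure always synchronizing, and the proof of the
lemma already reveals the combinatorial structure I should exploit: for
each of the two functions $f \in \{\fmax_T, \fsum_T\}$, sure always
synchronizing reduces to a purely graph-theoretic reachability/safety
question on the transition system of~$\M$.

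First I would handle $f = \fmax_T$. As the lemma's proof shows, an always
$1$-synchronizing sequence according to~$\fmax_T$ must consist entirely of
Dirac distributions supported in~$T$, so winning is equivalent to the
existence of an infinite path that stays in~$T$ using only the
\emph{deterministic} transitions of~$\M$, i.e.\ transitions $(q,a,q')$
with $\delta(q,a)(q')=1$. I would form the graph $\tuple{T,E}$ where
$(q,q') \in E$ iff some action~$a$ satisfies $\delta(q,a)(q')=1$, and
check whether $\Supp(\mu_0) \subseteq T$, $\mu_0$ is Dirac on some
$q_0 \in T$, and $q_0$ can reach a cycle within $\tuple{T,E}$. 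Reachability
of a cycle in a finite graph is computable in polynomial time (e.g.\ by
searching for a nontrivial strongly connected component reachable from
$q_0$, or iterating a $\Pre$-style fixpoint restricted to deterministic
edges). A pure memoryless strategy suffices: follow the path to the cycle
and loop forever, selecting at each state the witnessing action, which
depends only on the current state.

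Next I would treat $f = \fsum_T$. By the remark following the lemma, the
winning region here coincides with the set of initial distributions that
are sure/almost-sure winning for the traditional safety objective
$\Box T$, since always $1$-synchronizing according to~$\fsum_T$ means the
entire probability mass stays within~$T$ at every step. The sure-winning
region for safety in an MDP is the greatest fixpoint $\nu X.\, (T \cap
\Pre(X))$, computable in polynomial time by the standard monotone
iteration using the operator~$\Pre$ defined in the preliminaries; $\mu_0$
is winning iff $\Supp(\mu_0)$ is contained in this fixpoint. Again pure
memoryless strategies are sufficient, since from every state in the safe
region there is a single action keeping all successors in the region, and
this choice need not depend on the history.

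I expect no serious obstacle, as both cases reduce to well-understood
polynomial-time graph computations once Lemma~\ref{lem:always} is in hand;
the only care needed is to verify precisely that the two reductions are
correct, namely that for $\fmax_T$ the Dirac-path characterization is both
necessary (from the lemma) and sufficient (a deterministic cyclic path
yields a sure always $1$-synchronizing strategy), and that for $\fsum_T$
the support-safety characterization matches the traditional safety
winning region. Both directions are essentially spelled out in the proof
of the lemma, so the remaining work is to assemble these observations and
cite the polynomial-time safety computation~\cite{CH12} and the
elementary cycle-reachability test, concluding that membership is in
polynomial time with pure memoryless witnesses.
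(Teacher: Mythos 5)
Your proposal is correct and follows essentially the same route as the paper: using Lemma~\ref{lem:always} to reduce to sure winning, then characterizing always synchronizing for $\fsum_T$ as the traditional safety objective $\Box T$ (polynomial-time computable) and for $\fmax_T$ as the existence of an infinite path in~$T$ through deterministic transitions, with pure memoryless witnesses in both cases. The extra algorithmic detail you supply (the greatest-fixpoint computation and the cycle-reachability test) is just a spelled-out version of what the paper cites or asserts.
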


\paragraph{Eventually synchronizing.}
For all functions $f \in \{\fmax_T, \fsum_T\}$, the following inclusions
hold: $\winsure{event}(f) \subseteq \winas{event}(f) \subseteq \winlim{event}(f)$
and we show that the inclusions are strict in general.

\begin{figure}[t]
\begin{center}
    \begin{picture}(80,15)(0,0)

\node[Nmarks=i,iangle=180](n0)(10,2){$q_0$}
\node[Nmarks=n](n1)(32,2){$q_1$}
\node[Nmarks=n](n2)(52,2){$q_2$}
\node[Nmarks=n](n3)(72,2){$q_3$}

\drawedge(n0,n1){$a,b:\!\sfrac{1}{2}$}
\drawloop[ELside=l,loopCW=y, loopangle=90, loopdiam=4](n0){$a,b:\!\sfrac{1}{2}$}

\drawedge(n1,n2){$b$}
\drawloop[ELside=l,loopCW=y, loopangle=90, loopdiam=4](n1){$a$}

\drawedge(n2,n3){$a,b$}
\drawloop[ELside=l,loopCW=y, loopangle=90, loopdiam=4](n3){$a,b$}

\end{picture}
\end{center}
 \caption{An MDP~$\M$  such that $\winsure{event}(q_1) \neq \winas{event}(q_1)$ 
and $\winas{event}(q_2) \neq \winlim{event}(q_2)$. \label{fig:almost-limit-eventually-differ}}
\end{figure}

\begin{lemma}\label{lem:dif-in-def}
There exists an MDP $\M$ and states $q_1,q_2$ such that
$(i)$ $\winsure{event}(q_1) \subsetneq \winas{event}(q_1)$, and 
$(ii)$ $\winas{event}(q_2) \subsetneq \winlim{event}(q_2)$. 
\end{lemma}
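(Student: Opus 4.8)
The plan is to exhibit a single MDP (the one in Figure~\ref{fig:almost-limit-eventually-differ}) and verify both strict separations by analyzing what strategies can achieve at each of the target states $q_1$ and $q_2$, starting from the Dirac distribution $\mu_0 = q_0$. The key structural feature is the state $q_0$, which under every action splits its mass: half stays in $q_0$ and half moves toward $q_1$. This means a positive amount of probability can be kept in $q_0$ forever, so no single step can ever drain $q_0$ completely, which is precisely what rules out sure and (later) almost-sure winning.

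For part $(i)$, that $\winsure{event}(q_1) \subsetneq \winas{event}(q_1)$, I would argue as follows. First, $q_0 \notin \winsure{event}(q_1)$: sure eventually synchronizing at $q_1$ requires a step $n$ at which all probability mass sits in $q_1$, but since the self-loop at $q_0$ always retains probability $\sfrac{1}{2}$ at $q_0$ regardless of the chosen action, the distribution $\M^{\alpha}_n$ assigns at least $2^{-(n+1)}$ to $q_0$ for every $n$ and every strategy $\alpha$, so $\M^{\alpha}_n(q_1) < 1$ always. For the almost-sure direction, I would describe an explicit strategy that, at step $n$, has accumulated mass $1 - 2^{-n}$ along the path $q_0 \to q_1$ and then uses the self-loop $a$ at $q_1$ to hold that mass in place while waiting. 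The subtlety is that the mass arriving at $q_1$ trickles in over time, so I would choose, for each $\epsilon$, a target step large enough that at least $1-\epsilon$ of the mass has reached $q_1$, and keep it parked there via the $a$-loop; since this works for every $\epsilon>0$ with a single strategy that plays $a$ at $q_1$ forever, $q_0 \in \winas{event}(q_1)$.

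For part $(ii)$, that $\winas{event}(q_2) \subsetneq \winlim{event}(q_2)$, the obstruction to almost-sure winning at $q_2$ is that $q_2$ has no self-loop: its only outgoing transition (on either action) leads to the absorbing sink $q_3$, so mass cannot be held at $q_2$. Hence in any fixed single distribution $\M^{\alpha}_n$, whatever probability is at $q_2$ at step $n$ must have left $q_1$ at step $n-1$ and will be gone (into $q_3$) by step $n+1$; combined with the fact that $q_0$ always leaks a positive residual and only feeds $q_1$ (which feeds $q_2$ one step later), I would show that no single strategy can make $\M^{\alpha}_n(q_2)$ exceed $1-\epsilon$ for all $\epsilon$ simultaneously — the best one can do with a fixed strategy is bounded away from $1$, so $q_0 \notin \winas{event}(q_2)$. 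For the limit-sure direction, however, I would let $\epsilon$ depend on the strategy: for each $\epsilon$, first accumulate mass $1-\epsilon$ at $q_1$ (waiting long enough, using the $a$-loop at $q_1$), then synchronously push all of it in one step from $q_1$ to $q_2$ via action $b$, yielding $\M^{\alpha}_n(q_2) \geq 1-\epsilon$ at that single step $n$. Since this strategy depends on $\epsilon$ (the waiting time grows as $\epsilon \to 0$), it witnesses limit-sure but not almost-sure winning, giving $q_0 \in \winlim{event}(q_2)$.

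The main obstacle I anticipate is the almost-sure lower bound in part $(ii)$ — proving that \emph{no} single strategy achieves $(1-\epsilon)$-synchronization at $q_2$ for every $\epsilon$. This requires a uniform argument over all (history-dependent, randomized) strategies, not just the natural ones, showing that the residual mass stuck in $q_0 \cup q_1$ at any step is bounded below. I would handle this by tracking, for an arbitrary strategy, the total mass that has reached $q_2$ at each step and observing that because $q_2$ immediately drains to the absorbing $q_3$, the mass \emph{present at} $q_2$ at step $n$ is at most the mass that was at $q_1$ at step $n-1$; that $q_1$-mass is itself bounded away from $1$ by the persistent leak at $q_0$, so the supremum over $n$ of $\M^{\alpha}_n(q_2)$ stays strictly below $1$ for every fixed $\alpha$. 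Making this bound quantitative and strategy-independent is the delicate step, but it follows the same "$q_0$ never fully empties" phenomenon that drives part $(i)$.
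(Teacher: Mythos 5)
Your part $(i)$ and the limit-sure half of part $(ii)$ are correct and match the paper's proof (same MDP, same strategies: play $a$ forever for almost-sure winning in $q_1$; play $a$ for $k$ steps then $b$ for limit-sure winning in $q_2$). The minor quantitative slip in part $(i)$ (the mass at $q_0$ after $n$ steps is exactly $2^{-n}$, not merely at least $2^{-(n+1)}$) is harmless, since positivity is all you need.

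However, there is a genuine gap in your argument that $q_0 \notin \winas{event}(q_2)$, which you yourself flag as the delicate step. Your chain is: $\M^{\alpha}_n(q_2) \leq \M^{\alpha}_{n-1}(q_1)$, and ``that $q_1$-mass is itself bounded away from $1$ by the persistent leak at $q_0$, so the supremum over $n$ of $\M^{\alpha}_n(q_2)$ stays strictly below $1$.'' The premise is false: the leak at $q_0$ only gives $\M^{\alpha}_{n-1}(q_1) \leq 1 - 2^{-(n-1)}$, which is \emph{not} bounded away from $1$ uniformly in $n$ --- indeed under the always-$a$ strategy $\M^{\alpha}_n(q_1) \to 1$, which is exactly what makes part $(i)$ work. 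And if you weaken the premise to the true pointwise statement ``$\M^{\alpha}_{n-1}(q_1) < 1$ for each $n$,'' the conclusion no longer follows: pointwise bounds below $1$ do not bound the supremum below $1$ (again, part $(i)$ is the counterexample). Distinguishing ``every term is $<1$'' from ``the supremum is $<1$'' is the entire content of the sure/almost-sure separation, so this step cannot be waved through. The missing idea --- which is the paper's argument --- is irreversibility via the absorbing state: if $\M^{\alpha}_n(q_2) = p > 0$ at some step $n$, then this mass moves to $q_3$ at step $n+1$ and stays there forever, so $\M^{\alpha}_m(q_2) \leq 1 - p$ for \emph{all} $m > n$. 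Combined with the pointwise fact $\M^{\alpha}_n(q_2) \leq 1 - 2^{-n} < 1$ (from the $q_0$ leak), this yields $\sup_n \M^{\alpha}_n(q_2) \leq \max\bigl(\max_{m \leq n_1} \M^{\alpha}_m(q_2),\, 1-p\bigr) < 1$, where $n_1$ is the first step at which $q_2$ carries positive mass $p$. You mention that $q_2$ drains into the absorbing $q_3$, but you use this only to bound $\M^{\alpha}_n(q_2)$ by the earlier $q_1$-mass; the correct use is to cap all \emph{later} $q_2$-masses by $1-p$, which is what rules out a subsequence of times along which the $q_2$-mass tends to $1$.
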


\begin{proof}
Consider the MDP $\M$ with states $q_0, q_1, q_2, q_3$ and actions~$a,b$ as shown in 
\figurename~\ref{fig:almost-limit-eventually-differ}. 
All transitions are deterministic except from $q_0$ where on all actions, 
the successor is $q_0$ or $q_1$ with probability $\frac{1}{2}$. 
Let   the initial distribution~$\mu_0$ be  a Dirac distribution on~$q_0$.

To establish $(i)$, we show  that 
$\mu_0 \in \winas{event}(q_1)$ and $\mu_0 \not \in \winsure{event}(q_1)$.
To prove that $\mu_0 \in \winas{event}(q_1)$, consider the pure 
strategy that always plays~$a$. The outcome is such that the probability
to be in~$q_1$ after~$k$ steps is $1-\frac{1}{2^k}$, showing that $\M$
is almost-sure winning for the eventually synchronizing objective in~$q_1$ (from~$\mu_0$).
On the other hand, $\mu_0 \not \in \winsure{event}(q_1)$ because
for all strategies~$\alpha$, the probability in~$q_0$ remains always positive,
and thus in~$q_1$ we have $\M^{\alpha}_n(q_1) < 1$ for all $n \geq 0$,
showing that~$\M$ is not sure winning for the eventually synchronizing 
objective in~$q_1$ (from~$\mu_0$).

To establish $(ii)$, for all $k \geq 0$ consider a strategy that plays $a$ 
for $k$ steps, and then plays $b$. Then the probability
to be in $q_2$ after $k+1$ steps is $1-\frac{1}{2^k}$, showing that this strategy
is eventually $(1-\frac{1}{2^k})$-synchronizing in $q_2$. Hence, $\M$ is limit-sure winning 
for the eventually synchronizing objective in $q_2$ (from $\mu_0$).
Second, for all strategies, since the probability in $q_0$ remains always 
positive, the probability in $q_2$ is always smaller than $1$. Moreover,
if the probability $p$ in $q_2$ is positive after $n$ steps ($p>0$), 
then after any number $m > n$ of steps, the probability in $q_2$ is bounded by
$1-p$. It follows that the probability in $q_2$ is never equal to $1$ and 
cannot tend to $1$ for $m \to \infty$, showing that $\M$ is not almost-sure winning for 
the eventually synchronizing objective in $q_2$ (from $\mu_0$).
\qed
\end{proof}

The rest of this paper is devoted to the solution of the membership
problem for eventually synchronizing.  We make some preliminary
remarks to show that it is sufficient to solve the membership problem
according to $f = \fsum_T$ and for MDPs with a single initial state.
Our results will also show that pure strategies are sufficient in
all modes.

\paragraph{Remark.}
For eventually synchronizing and each winning mode, 
we show that the membership problem with function $\fmax_T$
is polynomial-time equivalent to the membership problem 
with function $\fsum_{T'}$ with a singleton ${T'}$.
First, for $\mu \in \text{\{sure, almost, limit\}}$,
we have $\win{event}{\mu}(\fmax_T) = \bigcup_{q \in T} \win{event}{\mu}(q)$,
showing that the membership problems for $\fmax$ 
are polynomial-time reducible to the corresponding membership problem 
for $\fsum_T$ with singleton~$T$. 
The reverse reduction is as follows. 
Given an MDP $\M$, a state $q$ and an initial distribution $\mu_0$,
we can construct an MDP $\M'$ and initial distribution $\mu'_0$
such that $\mu_0 \in \win{event}{\mu}(q)$ iff $\mu'_0 \in \win{event}{\mu}(\fmax_{Q'})$
where $Q'$ is the state space of $\M'$. The idea is to construct $\M'$ and $\mu'_0$
as a copy of~$\M$ and~$\mu_0$ where all states except $q$ are duplicated, and the 
initial and transition probabilities are equally distributed between 
the copies (see \figurename~\ref{fig:twin}). 
Therefore if the probability tends to~$1$ in some state, it has to be in~$q$.

\paragraph{Remark.}
To solve the membership problems for eventually synchronizing 
with function $\fsum_T$, it is sufficient to provide 
an algorithm that decides membership of Dirac distributions (i.e., assuming
MDPs have a single initial state), since  to solve the problem for an MDP $\M$ 
with initial distribution $\mu_0$, we can equivalently solve it for a copy of $\M$ with a
new initial state $q_{0}$ from which the successor distribution 
on all actions is~$\mu_0$. 
%
Therefore, it is sufficient to consider initial Dirac distributions $\mu_0$. 

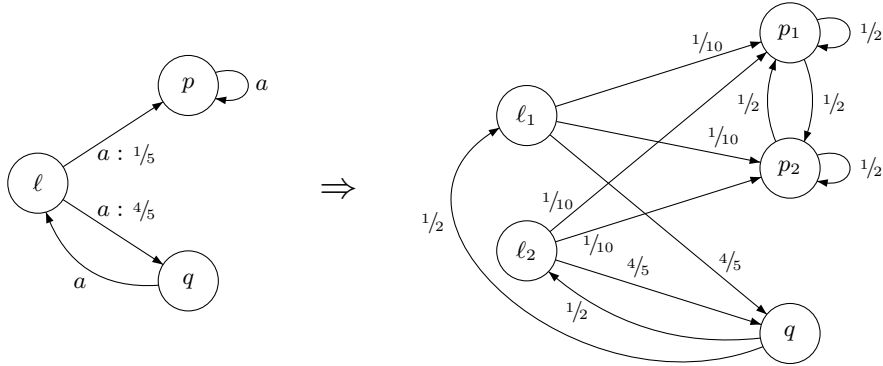
\begin{figure}[t]
\begin{center}
    \begin{picture}(120,50)(0,2)

\node[Nmarks=n](l)(5,25){$\ell$}
\node[Nmarks=n](p)(25,38){$p$}
\node[Nmarks=n](q)(25,12){$q$}

\drawedge[ELside=r,ELdist=0](l,p){$a:\sfrac{1}{5}$}
\drawedge[ELside=l,ELdist=0](l,q){$a:\sfrac{4}{5}$}

\drawedge[ELside=l, curvedepth=6](q,l){$a$}
\drawloop[ELside=l,loopCW=y, loopangle=0, loopdiam=4](p){$a$}

\node[Nframe=n](arrow)(45,24){{\Large $\Rightarrow$}}

\node[Nmarks=n](l1)(70,34){$\ell_1$}
\node[Nmarks=n](l2)(70,16){$\ell_2$}
\node[Nmarks=n](p1)(105,45){$p_1$}
\node[Nmarks=n](p2)(105,27){$p_2$}
\node[Nmarks=n](q)(105,5){$q$}

\drawedge[ELside=l,ELpos=70, ELdist=0](l1,p1){$\sfrac{1}{10}$}
\drawedge[ELside=l,ELpos=73, ELdist=.5](l1,p2){$\sfrac{1}{10}$}
\drawedge[ELside=l,ELpos=73, ELdist=0](l1,q){$\sfrac{4}{5}$}

\drawedge[ELside=l,ELpos=16, ELdist=0](l2,p1){$\sfrac{1}{10}$}
\drawedge[ELside=r,ELpos=25, ELdist=0](l2,p2){$\sfrac{1}{10}$}
\drawedge[ELside=l,ELpos=40, ELdist=.5](l2,q){$\sfrac{4}{5}$}

\drawbpedge[ELpos=70,ELside=l](q,210,30,l1,200,30){$\sfrac{1}{2}$} 
\drawedge[ELside=l, ELpos=75, ELdist=0, curvedepth=5](q,l2){$\sfrac{1}{2}$}

\drawloop[ELside=l,loopCW=y, loopangle=0, loopdiam=4](p1){$\sfrac{1}{2}$}
\drawedge[ELside=l,ELpos=52, curvedepth=3](p1,p2){$\sfrac{1}{2}$}
\drawedge[ELside=l,ELpos=48, curvedepth=3](p2,p1){$\sfrac{1}{2}$}
\drawloop[ELside=l,loopCW=y, loopangle=0, loopdiam=4](p2){$\sfrac{1}{2}$}

\end{picture}
\end{center}
 \caption{
State duplication ensures that the  probability mass can never 
be accumulated in a single state except in~$q$ (we omit action $a$ for readability).
\label{fig:twin}}
\end{figure}

\section{One-Letter Alternating Automata}\label{sec:1L-AFA}

In this section, we consider \emph{one-letter alternating automata} (1L-AFA)
as they have a structure of alternating graph analogous to MDP (i.e., when
ignoring the probabilities). We review classical decision problems
for 1L-AFA, and establish the complexity of a new problem, the 
\emph{universal finiteness problem} which is to decide if from every initial
state the language of a given 1L-AFA is finite. 
These results of independent interest are useful to establish the PSPACE lower 
bounds for eventually synchronizing in MDPs.

\paragraph{One-letter alternating automata.}
Let $\Bool(Q)$ be the set of positive Boolean formulas over $Q$, i.e. Boolean
formulas built from elements in $Q$ using $\land$ and $\lor$. A set 
$S \subseteq Q$ \emph{satisfies} a formula $\varphi \in \Bool(Q)$ (denoted $S \models \varphi$)
if $\varphi$ is satisfied when replacing in $\varphi$ the elements in $S$ by \true, 
and the elements in $Q \setminus S$ by \false.

A \emph{one-letter alternating finite automaton}  is a tuple 
$\A=\tuple{Q,\delta_{\A},\F}$ where 
$Q$ is a finite set of states, 
$\delta_{\A}: Q \to \Bool(Q)$ is the transition function, 
and $\F \subseteq Q$ is the set of accepting states. 
We assume that the formulas in transition function are in disjunctive normal form.
Note that the alphabet of the automaton is omitted, as it has a single letter. 
In the language of a 1L-AFA, only the length of words is relevant. 
For all $n\geq 0$, define the set $Acc_{\A}(n,\F) \subseteq Q$ of states from which 
the word of length $n$ is accepted by $\A$ as follows:
\begin{itemize}
	\item $Acc_{\A}(0,\F) = \F$;
	\item $Acc_{\A}(n,\F) = \{q \in Q \mid Acc_{\A}(n-1,\F) \models \delta(q) \}$ for all $n > 0$.
\end{itemize}

\noindent The set $\LL(\A_q) = \{n \in \nat \mid q \in Acc_{\A}(n,\F)\}$ is the 
\emph{language} accepted by $\A$ from initial state~$q$.
%

For fixed $n$, we view $Acc_{\A}(n,\cdot)$ as an operator on $2^Q$
that, given a set $\F \subseteq Q$ computes the set $Acc_{\A}(n,\F)$. 
Note that $Acc_{\A}(n,\F) = Acc_{\A}(1,Acc_{\A}(n-1,\F))$ for all $n \geq 1$.
Denote by $\Pre_{\A}(\cdot)$ the operator $Acc_{\A}(1,\cdot)$. 
Then for all $n \geq 0$ the operator $Acc_{\A}(n,\cdot)$ coincides
with $\Pre^n_{\A}(\cdot)$, the $n$-th iterate of $\Pre_{\A}(\cdot)$.


\paragraph{Decision problems.}
We present classical decision problems for alternating automata, namely 
the emptiness and finiteness problems, and we introduce a variant of
the finiteness problem that will be useful for solving synchronizing 
problems for MDPs.

\begin{itemize}
\item 
The \emph{emptiness problem} for 1L-AFA is to decide, given a 1L-AFA $\A$
and an initial state $q$, whether $\LL(\A_q)=\emptyset$. 
The emptiness problem can be solved by checking whether $q \in \Pre^n_{\A}(\F)$
for some $n \geq 0$. 
It is known that the emptiness problem is PSPACE-complete,
even for transition functions in disjunctive normal form~\cite{Holzer95,AFA1}.  

\item 
The \emph{finiteness problem} is to decide, given a 1L-AFA $\A$
and an initial state $q$, whether $\LL(\A_q)$ is finite.
The finiteness problem can be solved in (N)PSPACE by 
guessing $n,k \leq 2^{\abs{Q}}$ such that $\Pre^{n+k}_{\A}(\F) = \Pre^n_{\A}(\F)$
and $q \in \Pre^n_{\A}(\F)$. The finiteness problem is PSPACE-complete
by a simple reduction from the emptiness problem: from an instance $(\A,q)$
of the emptiness problem, construct $(\A',q')$ where $q'=q$ and 
$\A' = \tuple{Q,\delta',\F}$ is a copy of $\A = \tuple{Q,\delta,\F}$ 
with a self-loop on~$q$ (formally, $\delta'(q) = q \lor \delta(q)$
and $\delta'(r) = \delta(r)$ for all $r \in Q \setminus \{q\}$).
It is easy to see that $\LL(\A_q)=\emptyset$ iff $\LL(\A'_{q'})$ is finite.

\item 
The \emph{universal finiteness problem} 
is to decide, given a 1L-AFA $\A$, whether $\LL(\A_q)$ is finite
for all states $q$. This problem can be solved by checking whether 
$\Pre^n_{\A}(\F) = \emptyset$ for some $n \leq 2^{\abs{Q}}$, 
and thus it is in PSPACE.
Note that if $\Pre^n_{\A}(\F) = \emptyset$, then $\Pre^m_{\A}(\F) = \emptyset$
for all $m \geq n$.
\end{itemize}

Given the PSPACE-hardness proofs of the emptiness and finiteness
problems, it is not easy to see that the universal finiteness  
problem is PSPACE-hard.

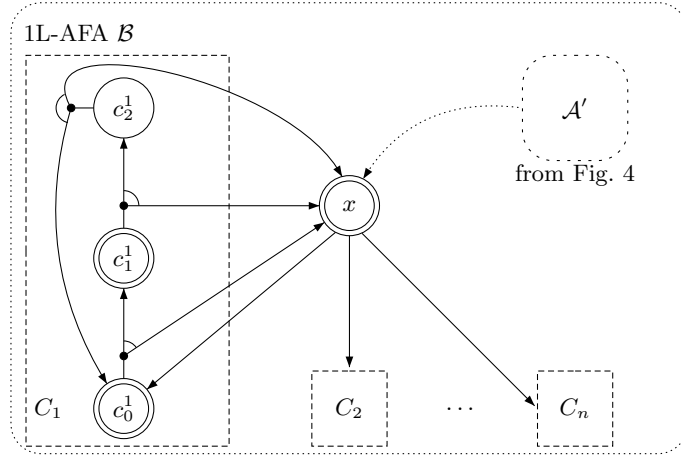
\begin{figure}[t]
\begin{center}
    \begin{picture}(90,60)(0,2)

\node[Nmarks=n, Nw=90, Nh=60, dash={0.2 0.5}0](m1)(45,29){}
\node[Nframe=n](label)(9,55){1L-AFA $\B$}

\node[Nmarks=r](n1)(45,32){$x$}

\drawpolygon[dash={0.8 0.5}0](40,10)(40,0)(50,0)(50,10)
\node[Nframe=n,Nw=1, Nh=1](n3)(45,10){}
\node[Nframe=n,Nw=1, Nh=1](nm3)(47,10){}
\node[Nframe=n](label)(45,5){$C_2$}

\node[Nframe=n](label)(60,5){$\dots $}

\drawpolygon[dash={0.8 0.5}0](70,10)(70,0)(80,0)(80,10)
\node[Nframe=n,Nw=1, Nh=1](n4)(70,5){}
\node[Nframe=n,Nw=1, Nh=1](nm4)(75,10){}
\node[Nframe=n](label)(75,5){$C_n$}

\node[Nmarks=n, Nw=14, Nh=14, dash={0.4 1}0](nm1)(75,45){}
\node[Nframe=n,Nw=1, Nh=1](aaa)(68,45){}
\node[Nframe=n](label)(75,45){$\A'$}
\node[Nframe=n](label)(75,36){from \figurename~\ref{fig:pre-empty-reduction}}

\drawedge(n1,n3){}
\drawedge[syo=-2](n1,n4){}

\drawpolygon[dash={0.8 0.5}0](2,52)(29,52)(29,0)(2,0)
\node[Nframe=n](n2)(5,5){$C_1$}

\node[Nmarks=r](c0)(15,5){$c_0^1$}
\node[Nw=1, Nh=1,Nfill=y](cc0)(15,12){}
\node[Nmarks=r](c1)(15,25){$c_1^1$}
\node[Nw=1, Nh=1,Nfill=y](cc1)(15,32){}
\node[Nmarks=n](c2)(15,45){$c_2^1$}
\node[Nw=1, Nh=1,Nfill=y](cc2)(8,45){}
\drawedge[AHnb=0](c0,cc0){}
\drawedge[AHnb=0](c1,cc1){}
\drawedge[AHnb=0](c2,cc2){}

\drawedge[ELpos=50, ELside=r, syo=-2](n1,c0){}
\drawedge[ELpos=50, ELside=r](cc0,n1){}
\drawedge[ELside=r](cc0,c1){}
\drawarc[](15,12,2,33.7,90)

\drawedge[ELpos=50, ELside=r](cc1,n1){}
\drawedge[ELside=r](cc1,c2){}
\drawarc[](15,32,2,0,90)

\drawbpedge[ELpos=70,ELside=r](cc2,120,15,n1,90,15){} 
\drawedge[ELpos=50, ELside=r,   curvedepth=-5](cc2,c0){}
\drawarc[](8,45,2,116,257)

\drawedge[ELpos=50, ELside=r, dash={0.2 0.5}0,  curvedepth=-5](aaa,n1){}

\end{picture}
\end{center}
 \caption{Sketch of reduction to show PSPACE-hardness of the universal finiteness problem for
1l-AFA.}
\label{fig:pre-empty-reduction2} 
\end{figure}

\begin{lemma}\label{lem:universal-finiteness-pspace-hard}
The universal finiteness problem for 1L-AFA is PSPACE-hard.
\end{lemma}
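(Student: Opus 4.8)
I would prove PSPACE-hardness by reduction from the emptiness problem for 1L-AFA, which is PSPACE-complete even for transition functions in disjunctive normal form. The essential difficulty---and the reason the text warns that the hardness is ``not easy to see''---is the switch of quantifier: emptiness fixes a single initial state $q$, whereas universal finiteness ranges over \emph{all} states and asks that every $\LL(\A_q)$ be finite. Recalling that universal finiteness of an automaton holds iff its $\Pre$-orbit from $\F$ eventually becomes empty (i.e. $\Pre^n_{\A}(\F)=\emptyset$ for some $n$), my goal is to build an automaton $\B$ whose whole orbit $\Pre^n_\B(\F)$ collapses to $\emptyset$ exactly when the source instance has \emph{empty} language, and stays non-empty forever---so that some state has infinite language---otherwise.

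\textbf{Construction.} Starting from an emptiness instance, I would first put it in the normalized form given by the automaton $\A'$ of \figurename~\ref{fig:pre-empty-reduction}, whose role is to expose non-emptiness as a single \emph{firing} event: a designated state enters the $\Pre$-orbit at some step iff the original language is non-empty. I would then assemble $\B$ (\figurename~\ref{fig:pre-empty-reduction2}) from $\A'$, a central state $x$, and $n$ counter gadgets $C_1,\dots,C_n$, each a simple cycle whose length $\ell_i$ is the $i$-th prime. Each counter state is guarded by a conjunction with $x$ (so a counter survives one $\Pre$-step only while $x$ is alive), and $x$ is in turn fed by the counter entry states and by the firing state of $\A'$. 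The coprime lengths make $x$ recur with period $\mathrm{lcm}(\ell_1,\dots,\ell_n)=\prod_i \ell_i$, which is exponential in $\sum_i \ell_i$ while the gadget has only polynomial size; this matches the $2^{\abs{Q}}$ transient of the $\Pre$ operator and lets the gadget ``wait long enough'' to detect whether $\A'$ ever fires.

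\textbf{Correctness claim and conclusion.} I would then prove that $\B$ is universally finite iff the emptiness instance is positive. In one direction, if $\A'$ never fires then the mutual guarding drains the $x$/counter block and, together with the collapse of the normalized $\A'$ part, the whole orbit $\Pre^n_\B(\F)$ reaches $\emptyset$, so every state has finite language. In the other direction, if $\A'$ fires then $x$ is latched into the orbit and, by the coprime-cycle behaviour, the orbit never empties, so some state has infinite language. As the reduction is polynomial, PSPACE-hardness of emptiness transfers to universal finiteness.

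\textbf{Main obstacle.} The delicate part is the correctness of the latch/counter interaction. I must exhibit a precise invariant describing, at each step $n$, exactly which counter states belong to $\Pre^n_\B(\F)$ as a function of $n \bmod \ell_i$ and of the aliveness of $x$, and use a Chinese-remainder/lcm argument to show that once $x$ fires it remains in $\Pre^n_\B(\F)$ for \emph{all} subsequent $n$, while in the non-firing case the guards empty every state within the exponential window. Guaranteeing that no stray state of $\A'$ retains an infinite language in the negative case---which is precisely what uses the normalization of \figurename~\ref{fig:pre-empty-reduction}---is the step I expect to require the most care.
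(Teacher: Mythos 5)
Your proposal is correct and takes essentially the same route as the paper's proof: a reduction from 1L-AFA emptiness built from the latched copy $\A'$ (self-loop on $q_0$) together with the $x$-guarded prime-cycle gadgets $C_1,\dots,C_n$, whose least common multiple $\prod_{i=1}^{n} p_i > 2^{\abs{Q}}$ keeps $x$ alive through the entire transient of the $\Pre$-orbit, so that the orbit ultimately empties if and only if $q_0$ never enters it. The differences are cosmetic: you additionally feed the firing state into $\delta(x)$ (harmless, since the self-loop on $q_0$ already provides the latch), and the invariant you defer---the hole in each cycle $C_i$ shifting backward one position per step, so that all entry states $c^i_0$ die simultaneously exactly at steps $\equiv -1 \pmod{\prod_i p_i}$---is precisely what the paper verifies through its choice $\F' = \F \cup \{x\} \cup \bigcup_{i} (C_i \setminus \{c^i_{p_i-1}\})$.
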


\begin{proof}
The proof is by a reduction from the emptiness problem for 1L-AFA,
which is PSPACE-complete~\cite{Holzer95,AFA1}. The language of a
1L-AFA~$\A = \tuple{Q, \delta, \F}$ is non-empty if $q_0 \in
\Pre_{\A}^i(\F)$ for some $i\geq 0$.  Since the sequence
$\Pre_{\A}^i(\F)$ is ultimately periodic, it is sufficient to compute
$\Pre_{\A}^i(\F)$ for all $i\leq 2^{\abs{Q}}$.

From $\A$, we construct a 1L-AFA~$B = \tuple{Q', \delta', \F'}$ with
set $\F'$ of accepting states such that the sequence $\Pre_B^i(\F')$
in $B$ mimics the sequence $\Pre_{\A}^i(\F)$ in $\A$ for $2^{\abs{Q}}$
steps.  The automaton~$B$ contains the state space of $\A$, i.e. $Q
\subseteq Q'$.  The goal is to have $\Pre_B^i(\F') \cap Q =
\Pre_{\A}^i(\F)$ for all $i \leq 2^{\abs{Q}}$, as long as $q_0 \not\in
\Pre_{\A}^i(\F)$. Moreover, if $q_0 \in \Pre_{\A}^i(\F)$ for some
$i\geq 0$, then $\Pre_B^j(\F')$ will contain $q_0$ for all $j\geq i$
(the state $q_0$ has a self-loop in $B$), and if $q_0 \not\in
\Pre{\A}^i(\F)$ for all $i\geq 0$, then $B$ is constructed such that
$\Pre_B^j(\F') = \emptyset$ for sufficiently large~$j$ (roughly for $j
> 2^{\abs{Q}}$).  Hence, the language of $\A$ is non-empty if and only
if the sequence $\Pre_B^j(\F')$ is not ultimately empty, that is if
and only if the language of $B$ is infinite from some state (namely
$q_0$).

The key is to let $B$ simulate $\A$ for exponentially many steps, and 
to ensure that the simulation stops if and only if $q_0$ is not reached within $2^{\abs{Q}}$ steps.
We achieve this by defining $B$ as the gadget in~\figurename~\ref{fig:pre-empty-reduction2}
connected to a modified copy $\A'$ of $\A$ with the same state space. 
The transitions in $\A'$ are defined as follows, where $x$ is the entry state
of the gadget (see \figurename~\ref{fig:pre-empty-reduction}):
for all $q \in Q$ let $(i)$ $\delta_{B}(q) = x \land \delta_{\A}(q)$ if $q \neq q_0$,
and $(ii)$ $\delta_{B}(q_0) = q_0 \lor (x \land \delta_{\A}(q_0))$.
%
Thus, $q_0$ has a self-loop, and given a set $S \subseteq Q$ in the automaton $\A$, 
if $q_0 \not\in S$, then $\Pre_{\A}(S) = \Pre_B(S \cup \{x\})$ that is 
$\Pre_B$ mimics $\Pre_{\A}$ when $x$ is in the argument (and $q_0$ has not been reached yet).
Note that if $x \not\in S$ (and $q_0 \not\in S$), 
then $\Pre_B(S) = \emptyset$, that is unless $q_0$ has been reached, the
simulation of $\A$ by $B$ stops. 
Since we need that $B$ mimics $\A$ for $2^{\abs{Q}}$ steps, we define 
the gadget and the set $\F'$ to ensure that $x \in \F'$ and 
if $x \in \Pre_B^i(\F')$, then $x \in \Pre_B^{i+1}(\F')$ for all $i \leq 2^{\abs{Q}}$. 

In the gadget, the state $x$ has nondeterministic 
transitions $\delta_{B}(x) = c^1_0 \lor c^2_0 \lor \dots \lor c^n_0$
to $n$ components with state space 
$C_i = \{c^i_0, \dots, c^i_{p_i-1} \}$ where
$p_i$ is the $(i+1)$-th prime number,
and the transitions\footnote{In expression $c^i_j$,
we assume that $j$ is interpreted modulo $p_i$.} 
$\delta_{B}(c^i_j) = x \land c^i_{j+1}$ form a loop in each component
($i=1,\dots,n$).
We choose $n$ such that $p^{\#}_n = \prod_{i=1}^{n} p_i > 2^{\abs{Q}}$ (take $n=\abs{Q}$). 
Note that the number of states in the gadget is $1 + \sum_{i=1}^{n} p_i \in O(n^2 \log n)$~\cite{BS96}
and hence the construction is polynomial in the size of $\A$. 

By construction, for all sets $S$, we have $x \in \Pre_B(S)$ whenever the first state $c^i_0$
of some component $C_i$ is in $S$, and if $x \in S$, then  
$c^i_j \in S$ implies $c^i_{j-1} \in \Pre_B(S)$.
Thus, if $x \in S$, the operator $\Pre_B(S)$ 
`shifts' backward the states in each component;
and, $x$ is in the next iteration (i.e., $x \in Pre_B(S)$) 
as long as $c^i_0 \in S$ for some component $C_i$.

Now, define the set of accepting states $\F'$ in $B$ in such a way that 
all states $c^i_0$ disappear simultaneously only after $p^{\#}_n$ iterations.
Let $\F' = \F \cup \{x\} \cup \bigcup_{1 \leq i \leq n} (C_{i} \setminus \{c^i_{p_i-1}\})$, 
thus $\F'$ contains all states of the gadget except the last state of each
component. 
It is easy to check that, irrespective of the transition relation in $\A$,
we have $x \in \Pre_B^{i}(\F')$ if and only if $0 \leq i < p_n^{\#}$. 
Therefore, 
if $q_0 \in \Pre_{\A}^{i}(\F)$ for some $i$, then $q_0 \in \Pre_B^{j}(\F')$ 
for all $j \geq i$ by the self-loop on $q_0$. On the other hand, 
if $q_0 \not\in \Pre_{\A}^{i}(\F)$ for all $i \geq 0$, then since $x \not\in 
\Pre_B^{i}(\F')$ for all $i > p_n^{\#}$, we have $\Pre_B^{i}(\F') = \emptyset$ 
for all $i > p_n^{\#}$. This shows that the language of $\A$ is non-empty 
if and only if the language of $B$ is infinite from some state (namely $q_0$), 
and establishes the correctness of the reduction. \qed
\end{proof}

\begin{figure}[t]
\begin{center}
    \begin{picture}(122,45)(0,2)

\node[Nmarks=n, Nw=30, Nh=42, dash={0.2 0.5}0](m1)(15,20){}
\node[Nframe=n](label)(15,2){1L-AFA $\A$}
\node[Nmarks=n](n0)(15,36){$q_0$}
\node[Nframe=n](label)(26,31){$\delta_{A}(q_0) =~q_1$}
\node[Nmarks=n](n1)(15,21){$q_1$}
\node[Nframe=n](label)(29,16){$\delta_{A}(q_1) =~q_2 \land q_3$}
\node[Nw=1, Nh=1,Nfill=y](m1)(15,15){}
\node[Nmarks=r](n2)(5,8){$q_2$}
\node[Nmarks=n](n3)(25,8){$q_3$}
\drawedge(n0,n1){}
\drawedge[AHnb=0](n1,m1){}
\drawedge[ELpos=50](m1,n2){}
\drawedge[ELpos=50](m1,n3){}
\drawarc[](15,15,2,215,325)

\node[Nframe=n](arrow)(43,23){{\Large $\Rightarrow$}}

\node[Nmarks=n, Nw=30, Nh=42, dash={0.2 0.5}0](nm1)(81,20){}
\node[Nmarks=n, Nw=67, Nh=48, dash={0.4 1}0](m2)(88,22){}
\node[Nframe=n](label)(110,43){1L-AFA $\A'$}
\node[Nframe=n](label)(81,2){1L-AFA $\A$}
\node[Nmarks=n](nn0)(81,36){$q_0$}
\node[Nframe=n](label)(101.7,31){$\delta_{B}(q_0) = q_0 \lor (x \land \delta_{\A}(q_0))$}
\node[Nw=1, Nh=1,Nfill=y](mm0)(81,30){}
\node[Nmarks=n](nn1)(81,21){$q_1$}
\node[Nframe=n](label)(97.3,16){$\delta_{B}(q_1) = x \land \delta_{\A}(q_1)$}
\node[Nw=1, Nh=1,Nfill=y](mm1)(81,15){}
\node[Nmarks=r](nn2)(71,8){$q_2$}
\node[Nmarks=n](nn3)(91,8){$q_3$}
\drawedge[AHnb=0](nn0,mm0){}
\drawedge[AHnb=0](nn1,mm1){}

\node[Nmarks=r](nn4)(60,20){$x$}

\drawedge(mm0,nn1){}
\drawedge[ELpos=50, ELside=r](mm0,nn4){}
\drawarc[](81,30,2,205.5,270)

\drawedge[ELpos=50](mm1,nn2){}
\drawedge[ELpos=50](mm1,nn3){}
\drawedge[ELpos=50, ELside=r](mm1,nn4){}
\drawarc[](81,15,2,166.6,325)
\drawloop[ELside=l,loopCW=y, loopangle=90, loopdiam=4](nn0){}

\end{picture}
\end{center}
 \caption{Detail of the copy $\A'$ obtained from $\A$ in the reduction
of \figurename~\ref{fig:pre-empty-reduction2}.
}\label{fig:pre-empty-reduction}
\end{figure}
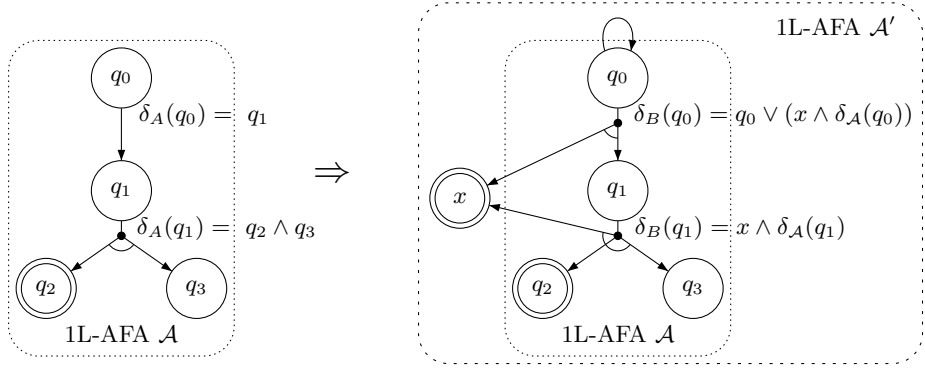

\paragraph{Relation with MDPs.}
The underlying structure of a Markov decision process $\M = \tuple{Q,\Act,\delta}$ 
is an alternating graph, where the successor $q'$ of a state $q$ is obtained by an existential 
choice of an action $a$ and a universal choice of a state $q' \in \Supp(\delta(q,a))$.
Therefore, it is natural that some questions related to MDPs 
have a corresponding formulation in terms of alternating automata. We show that
such connections exist between synchronizing problems for MDPs and language-theoretic
questions for alternating automata, such as emptiness and universal finiteness.
%
Given a 1L-AFA $\A = \tuple{Q, \delta_{\A}, \F}$, assume without loss of generality
that the transition function $\delta_{\A}$ is such that 
$\delta_{\A}(q) = c_1 \lor \dots \lor c_m$ has the same number $m$ of conjunctive clauses 
for all $q \in Q$. From $\A$, construct
the MDP $\M_{\A} = \tuple{Q, \Act, \delta_{\M}}$ where $\Act = \{a_1, \dots, a_m\}$
and $\delta_{\M}(q,a_k)$ is the uniform distribution over the states occurring
in the $k$-th clause $c_k$ in $\delta_{\A}(q)$, for all $q \in Q$ and $a_k \in \Act$.
Then, we have $Acc_{\A}(n,\F) = \Pre_{\M}^n(\F)$ for all $n \geq 0$.
Similarly, from an MDP $\M$ and a set $T$ of states, we can construct
a 1L-AFA $\A = \tuple{Q, \delta_{\A}, \F}$ with $\F = T$ such that 
$Acc_{\A}(n,\F) = \Pre_{\M}^n(T)$ for all $n \geq 0$ (let $\delta_{\A}(q) = \bigvee_{a \in \Act}
\bigwedge_{q' \in \post(q,a)} q'$ for all $q \in Q$). 

Several decision problems for 1L-AFA can be solved by computing 
the sequence $Acc_{\A}(n,\F)$, and we show that some synchronizing 
problems for MDPs require the computation of the sequence $\Pre_{\M}^n(\F)$. 
Therefore, the above relation between 1L-AFA and MDPs establishes bridges that 
we use in Section~\ref{sec:eventually} to transfer complexity results from 1L-AFA to MDPs. 


\section{Eventually Synchronization}\label{sec:eventually}

In this section, we show the PSPACE-completeness of the
membership problem for eventually synchronizing objectives
and the three winning modes. By the remarks at the end
of Section~\ref{sec:def}, we consider the membership problem
with function $\fsum$ and Dirac initial distributions (i.e., single
initial state).

\subsection{Sure  eventually synchronization}\label{sec:sure-event-sync}
Given a target set $T$, the membership problem for sure-winning eventually 
synchronizing objective in $T$ can be solved by computing the sequence 
$\Pre^n(T)$ of iterated predecessor. A state $q_0$ is sure-winning for 
eventually synchronizing in~$T$ if $q_0 \in \Pre^n(T)$ for some $n \geq 0$.

\begin{lemma}\label{lem:sure-ss-pre}
Let $\M$ be an MDP and $T$ be a target set. 
For all states $q_0$, we have $q_0 \in \winsure{event}(\fsum_T)$ 
if and only if there exists $n \geq 0$ such that $q_0 \in \Pre_{\M}^{n}(T)$. 
\end{lemma}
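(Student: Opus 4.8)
The plan is to prove the two implications separately, in both cases exploiting the recursive structure of the operator $\Pre_{\M}$ together with the fact that sure winning according to $\fsum_T$ means there exist a strategy $\alpha$ and a step $n$ with $\M^{\alpha}_n(T) = 1$, i.e. after $n$ steps the whole probability mass sits in $T$.

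For the direction ($\Leftarrow$), I would assume $q_0 \in \Pre_{\M}^n(T)$ for some $n \geq 0$ and build a pure strategy $\alpha$ that plays ``towards $T$ by levels'': whenever the play has made $k < n$ steps and sits in a state $q \in \Pre_{\M}^{n-k}(T)$, the definition of $\Pre_{\M}$ guarantees an action $a$ with $\post(q,a) \subseteq \Pre_{\M}^{n-k-1}(T)$, and $\alpha$ selects such an $a$. A straightforward induction on $k$ then shows $\Supp(\M^{\alpha}_k) \subseteq \Pre_{\M}^{n-k}(T)$ for all $0 \leq k \leq n$; at level $k = n$ this gives $\Supp(\M^{\alpha}_n) \subseteq \Pre_{\M}^0(T) = T$, hence $\M^{\alpha}_n(T) = 1$, witnessing sure eventually synchronizing. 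Note that this strategy only needs to remember the number of elapsed steps, so finite memory suffices.

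For ($\Rightarrow$), I would assume $q_0 \in \winsure{event}(\fsum_T)$ and fix a witnessing strategy $\alpha$ and index $n$ with $\M^{\alpha}_n(T) = 1$, proving $q_0 \in \Pre_{\M}^n(T)$ by induction on $n$. The base case $n=0$ is immediate, since $\M^{\alpha}_0$ is the Dirac distribution on $q_0$ and $\M^{\alpha}_0(T) = 1$ forces $q_0 \in T = \Pre_{\M}^0(T)$. For the inductive step, the crucial observation is that $\M^{\alpha}_n(T) = 1$ means every path of positive probability under $\alpha$ lies in $T$ after $n$ steps. I would pick any action $a \in \Supp(\alpha(q_0))$; then every $q_1 \in \post(q_0,a)$ is reached with positive probability at step $1$, so the residual strategy obtained by shifting $\alpha$ past the prefix $q_0\,a\,q_1$ must itself put all probability in $T$ after $n-1$ steps, otherwise some positive-probability length-$n$ path would avoid $T$, contradicting $\M^{\alpha}_n(T)=1$. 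By the induction hypothesis each such $q_1$ lies in $\Pre_{\M}^{n-1}(T)$, hence $\post(q_0,a) \subseteq \Pre_{\M}^{n-1}(T)$ and therefore $q_0 \in \Pre_{\M}(\Pre_{\M}^{n-1}(T)) = \Pre_{\M}^n(T)$.

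The main obstacle is the ($\Rightarrow$) direction, where $\alpha$ may be randomized and history-dependent: the argument must make precise that a sure value of $1$ for $\fsum_T$ leaves no room for hedging, so that for a single action in the support of $\alpha(q_0)$ \emph{every} successor is again sure-winning in one fewer step. Once this ``probability $1$ forbids any escaping branch'' principle is isolated, the residual-strategy induction goes through cleanly, and the ($\Leftarrow$) direction is then routine.
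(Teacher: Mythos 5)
Your proof is correct and follows essentially the same route as the paper: an induction on the number of steps showing that sure winning in $i$ steps coincides with membership in $\Pre_{\M}^{i}(T)$, with the forward direction handled by the observation that every successor under an action in the support of the strategy must itself be sure winning one step earlier (the paper phrases this contrapositively, as a contradiction, but it is the same argument), and the backward direction by the level-by-level safe-action strategy. No gap; the reorganization into two separate inductions with explicit residual strategies is only a cosmetic difference.
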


\begin{proof}
We prove the following equivalence by induction (on the length $i$):
for all initial states $q_0$, there exists a strategy $\alpha$ sure-winning
in $i$ steps from $q_0$ (i.e., such that $\M^{\alpha}_i(T) = 1$)
if and only if $q_0 \in \Pre^{i}(T)$. The case $i=0$ trivially holds 
since for all strategies $\alpha$, we have $\M^{\alpha}_0(T)=1$ if and only if $q_0 \in T$.

Assume that the equivalence holds for all $i < n$. 
For the induction step, show that $\M$ is sure eventually synchronizing from $q_0$ (in $n$ steps) 
if and only if there exists an action $a$ such that $\M$ is sure eventually 
synchronizing (in $n-1$ steps) from all states $q' \in \post(q_0, a)$ (equivalently,
$\post(q_0, a) \subseteq \Pre^{n-1}(T)$ by the induction hypothesis, that is 
$q_0 \in \Pre^{n}(T)$). First, if all successors $q'$ of $q_0$ under some action $a$ 
are sure eventually synchronizing, then so is $q_0$ by playing~$a$ followed by
a winning strategy from each successor $q'$. 
For the other direction, assume towards contradiction that $\M$ is sure eventually 
synchronizing from $q_0$ (in $n$ steps), but for each action~$a$, 
there is a state $q' \in \post(q_0, a)$ that is not sure eventually synchronizing. 
Then, from $q'$ there is a positive probability to reach a state 
not in $T$ after $n-1$ steps, no matter the strategy played. 
Hence from~$q_0$, for all strategies, the probability mass in $T$ cannot be $1$ 
after $n$ steps, in contradiction with the fact that $\M$ is sure eventually 
synchronizing from $q_0$ in $n$ steps.  
It follows that the induction step holds, and the proof is complete.
\qed
\end{proof}

By Lemma~\ref{lem:sure-ss-pre}, 
the membership problem for sure eventually synchronizing is
equivalent to the emptiness problem of 1L-AFA, and thus PSPACE-complete.
Moreover if $q_0 \in \Pre_{\M}^{n}(T)$, a finite-memory strategy with~$n$ modes
that at mode~$i$ in a state~$q$ plays an action~$a$ such that
$\post(q,a)\subseteq \Pre^{i-1}(T)$ is sure winning for eventually synchronizing.  

There exists a family of MDPs $\M_n$ ($n \in \nat$) over alphabet $\{a,b\}$
that are sure winning for eventually synchronization, and where the sure winning strategies 
require exponential memory. The MDP $\M_2$ is shown in \figurename~\ref{fig:exp-mem}.
The structure of $\M_n$ is an initial uniform probabilistic transition
to $n$ components $H_1, \dots, H_n$ where $H_i$ is a cycle of length $p_i$
the $i$th prime number. On action $a$, the next state in the cycle is reached,
and on action $b$ the target state $q_T$ is reached, only from the last
state in the cycles. From other states, 
the action $b$ leads to $q_{\bot}$ (transitions not depicted).
A sure winning strategy for eventually synchronization in $\{q_T\}$ is to 
play $a$ in the first $p^{\#}_n = \prod_{i=1}^{n} p_i$ steps, and then play $b$.
This requires memory of size $p^{\#}_n > 2^n$ while the size of $\M_n$
is in $O(n^2 \log n)$~\cite{BS96}.
It can be proved by standard pumping arguments that no strategy of size
smaller than $p^{\#}_n$ is sure winning.

\begin{figure}[t]
\begin{center}
\def\fsize{\normalsize}

\begin{picture}(93,65)(0,0)

{\fsize

\node[Nmarks=i, iangle=180](q0)(9,30){$q_0$}
\node[Nmarks=n](q1)(29,51){$q^1_1$}
\node[Nmarks=n](q2)(49,51){$q^1_2$}

\node[Nmarks=n](q3)(29,13){$q^2_1$}
\node[Nmarks=n](q4)(49,23){$q^2_2$}
\node[Nmarks=n](q5)(49,3){$q^2_3$}

\node[Nmarks=n](safe)(69,30){$q_T$}
\node[Nmarks=n](bad)(89,30){$q_{\bot}$}

\node[Nmarks=n, Nw=30, Nh=18, dash={1.5}0, ExtNL=y, NLangle=22, NLdist=1](A1)(39,51){$H_1$}
\node[Nmarks=n, Nw=30, Nh=32, dash={1.5}0, ExtNL=y, NLangle=38, NLdist=1](A2)(39,13){$H_2$}


\drawedge[ELpos=43, ELside=l, ELdist=1, curvedepth=0](q0,q1){$a,b: \sfrac{1}{2}$}
\drawedge[ELpos=40, ELside=r, ELdist=1, curvedepth=0](q0,q3){$a,b: \sfrac{1}{2}$}

\drawedge[ELpos=50, ELside=l, ELdist=1, curvedepth=4](q1,q2){$a$}
\drawedge[ELpos=50, ELside=l, ELdist=1, curvedepth=4](q2,q1){$a$}

\drawedge[ELpos=50, ELside=l, ELdist=1, curvedepth=4](q3,q4){$a$}
\drawedge[ELpos=50, ELside=r, ELdist=1.5, curvedepth=4](q4,q5){$a$}
\drawedge[ELpos=40, ELside=l, ELdist=1, curvedepth=4](q5,q3){$a$}

\drawedge[ELpos=50, ELside=l, ELdist=1, curvedepth=0](q2,safe){$b$}
\drawedge[ELpos=50, ELside=r, ELdist=1, curvedepth=0, syo=-3](q5,safe){$b$}


\drawedge[ELpos=48, ELside=l, ELdist=1, curvedepth=0](safe,bad){$a,b$}
\drawloop[ELside=r,loopCW=n, loopdiam=6, loopangle=90](bad){$a,b$}



}
\end{picture}
\caption{The MDP $\M_2$.}\label{fig:exp-mem}
\end{center}
\end{figure}

The following theorem summarizes the results for sure eventually synchronizing.

\begin{theorem}\label{theo:sure-eventually-pspace-c}
For sure eventually synchronizing  in MDPs:

\begin{enumerate}
\item (Complexity). The membership problem is PSPACE-complete.

\item (Memory). Exponential memory is necessary and sufficient for both pure 
and randomized strategies, and pure  strategies are sufficient. 
\end{enumerate}

\end{theorem}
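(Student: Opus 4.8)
The plan is to prove Theorem~\ref{theo:sure-eventually-pspace-c} by assembling the pieces that have already been established in the excerpt, and then supplying the one genuinely new ingredient, namely the exponential \emph{lower} bound on memory. For the complexity part, the membership problem is settled by Lemma~\ref{lem:sure-ss-pre}, which reduces sure eventually synchronizing to deciding whether $q_0 \in \Pre_{\M}^{n}(T)$ for some $n$. Via the correspondence between MDPs and 1L-AFA described in the ``Relation with MDPs'' paragraph (where $\Pre_{\M}^{n}(T) = Acc_{\A}(n,\F)$ with $\F = T$), this is exactly the emptiness problem for 1L-AFA, which is PSPACE-complete by~\cite{Holzer95,AFA1}. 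So membership is in PSPACE by the same bounded-iteration argument (the sequence $\Pre_{\M}^{n}(T)$ is ultimately periodic over the $2^{\abs{Q}}$ subsets of $Q$, so it suffices to check $n \le 2^{\abs{Q}}$), and PSPACE-hard by reducing a 1L-AFA emptiness instance to the corresponding MDP $\M_{\A}$ with target $\F$.

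For the memory upper bound, I would invoke the remark immediately following Lemma~\ref{lem:sure-ss-pre}: if $q_0 \in \Pre_{\M}^{n}(T)$, the \emph{pure} finite-memory strategy with $n$ modes that, in mode $i$ at state $q$, plays an action $a$ with $\post(q,a) \subseteq \Pre^{i-1}(T)$ is sure winning. Since we may take $n \le 2^{\abs{Q}}$, this strategy uses exponential memory and is pure, giving both sufficiency of exponential memory and sufficiency of pure strategies in one stroke.

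The substantive part is the matching lower bound, for which I would use the family $\M_n$ described before the theorem and sketched for $\M_2$ in \figurename~\ref{fig:exp-mem}. The MDP $\M_n$ branches uniformly from $q_0$ into cycles $H_1,\dots,H_n$ of prime lengths $p_1,\dots,p_n$; playing $a$ advances all cycles simultaneously, and playing $b$ reaches the target $q_T$ \emph{only} from the last state of each cycle, while from any other state $b$ leads to the absorbing sink $q_{\bot}$. The only way to put all probability mass on $q_T$ is to play $b$ at a step when \emph{every} cycle is simultaneously at its last state; by the Chinese Remainder Theorem this first happens after exactly $p^{\#}_n = \prod_{i=1}^{n} p_i$ steps of $a$, so the strategy that plays $a$ for $p^{\#}_n$ steps and then $b$ is sure winning, and $p^{\#}_n > 2^n$ while the size of $\M_n$ is in $O(n^2 \log n)$~\cite{BS96}. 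For necessity I would argue that any sure-winning strategy must, along the single $a$-branch, wait until step $p^{\#}_n$ before playing $b$ (playing $b$ earlier diverts positive mass to $q_{\bot}$); a standard pumping argument then shows that a strategy realized by a transducer with fewer than $p^{\#}_n$ modes must repeat a mode within the first $p^{\#}_n$ steps on the all-$a$ history and hence cannot distinguish the correct firing time from an earlier one, so it either fires $b$ too early or never synchronizes.

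I expect the lower-bound pumping argument to be the main obstacle to state cleanly, since one must argue uniformly over \emph{all} (possibly randomized) strategies rather than just pure ones: the cleanest route is to observe that against the synchronizing objective any randomization on $b$ before step $p^{\#}_n$ sends positive probability to $q_{\bot}$ (which is absorbing and disjoint from $q_T$), so a sure-winning strategy must play $a$ deterministically for the first $p^{\#}_n$ steps and play $b$ exactly at step $p^{\#}_n$; a transducer with fewer than $p^{\#}_n$ modes cannot encode this exact deadline, yielding the bound and simultaneously covering the randomized case. The remaining steps are routine, so the proof reduces to citing the earlier lemmas for parts that are already done and writing the pumping argument carefully.
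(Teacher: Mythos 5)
Your proposal is correct and follows essentially the same route as the paper: the complexity part via Lemma~\ref{lem:sure-ss-pre} and the 1L-AFA emptiness equivalence, the memory upper bound via the $n$-mode predecessor-tracking strategy (with $n \le 2^{\abs{Q}}$), and the lower bound via the prime-cycle family $\M_n$ with a pumping argument. In fact you supply more detail than the paper, which simply asserts that ``standard pumping arguments'' give the lower bound; your only looseness is that the repetition argument should be phrased on (mode, state) pairs at visits to the last state of each cycle $H_i$ (a mode repeating at two different states is harmless, since the next-move function also sees the state), but this is the standard fix and does not change the approach.
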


\subsection{Almost-sure  eventually synchronization}\label{sec:almost-eventually}

We show an example where infinite memory is necessary to win for
almost-sure eventually synchronizing. Consider the MDP
in~\figurename~\ref{fig:inf-mem} with initial state $q_0$.
  We construct a strategy that is almost-sure
  eventually synchronizing in $q_2$, showing that $q_0 \in
  \winas{event}(q_2)$.  First, observe that for all $\epsilon > 0$ we
  can have probability at least $1 - \epsilon$ in $q_2$ after finitely
  many steps: playing 
  $n$ times $a$ and then
  $b$ leads to probability $1- \frac{1}{2^n}$ in $q_2$. Thus the MDP
  is limit-sure eventually synchronizing in $q_2$. Moreover the
  remaining probability mass is in~$q_0$.  It turns out that that from
  any (initial) distribution with support $\{q_0,q_2\}$, the MDP is
  again limit-sure eventually synchronizing in $q_2$, and with support
  in $\{q_0,q_2\}$. Therefore we can take a smaller value of
  $\epsilon$ and play a strategy to have probability at least $1 -
  \epsilon$ in $q_2$, and repeat this for $\epsilon \to 0$. This
  strategy ensures almost-sure eventually synchronizing in $q_2$.  The
  next result shows that infinite memory is necessary for almost-sure
  winning in this example.

\begin{lemma}\label{lem:inf-mmeory-almost-event}
There exists an almost-sure eventually synchronizing MDP for which 
all almost-sure eventually synchronizing strategies require infinite memory.
\end{lemma}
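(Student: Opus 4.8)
The plan is to keep the positive direction brief---it is essentially the round-based argument already given before the statement---and to concentrate on the negative direction, namely that no finite-memory strategy can be almost-sure eventually synchronizing in $q_2$ for the MDP $\M$ of \figurename~\ref{fig:inf-mem}. For the positive direction I would only record that the strategy sketched above (for each $k$, drive the mass in $q_2$ above $1-\frac{1}{2^k}$ while keeping the residual support inside $\{q_0,q_2\}$, then restart with a smaller $\epsilon$) witnesses $\sup_n \M^{\alpha}_n(q_2)=1$ as $k\to\infty$, so $q_0\in\winas{event}(q_2)$; this strategy uses infinite memory precisely because the length of the $k$-th charging phase grows without bound.

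The core of the proof is the lower bound, which I would establish by contradiction. Suppose $\alpha$ is a finite-memory almost-sure eventually synchronizing strategy with memory $\mem$, and consider the \emph{finite} Markov chain $C=\M(\alpha)$ on $Q\times\mem$, with $\mu_n$ its distribution after $n$ steps from $(q_0,m_0)$. Writing $U=\{q_2\}\times\mem$, the winning condition is exactly $\sup_n \mu_n(U)=1$, since $\M^{\alpha}_n(q_2)=\mu_n(U)$. Because $C$ is finite, its transient mass vanishes and $(\mu_n)$ has finitely many limit points, each supported on recurrent classes and realized periodically. Hence $\sup_n\mu_n(U)=1$ can hold only if either $(A)$ $\mu_n(U)=1$ for some finite $n$, or $(B)$ some limit point $\nu$ satisfies $\nu(U)=1$.

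I would rule out $(A)$ using the structural fact, already noted before the statement, that the probability mass in $q_0$ stays strictly positive at every finite step (charging through $q_0$ always leaves a fraction behind); since $q_0\neq q_2$, this gives $\mu_n(U)<1$ for all $n$. The main work is ruling out $(B)$: I must show that no recurrent distribution of $C$ is supported entirely on $U$. Here the non-absorbing nature of $q_2$ is essential---under the action the strategy plays in the recurrent regime there is positive probability to leave $q_2$, while the only way to feed $q_2$ is from $q_1$ (itself fed from $q_0$), so any recurrent class reachable from $(q_0,m_0)$ must carry positive stationary weight on states projecting outside $q_2$. Tracing one step backwards from a hypothetical all-$U$ phase would force full concentration on $q_1$- and then on $q_0$-states, which cannot recur, giving the contradiction. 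Quantitatively, the minimal positive transition probabilities of $C$ together with its bounded size yield a uniform $\eta>0$ with $\nu(U)\le 1-\eta$ for every limit point $\nu$; combined with the vanishing transient mass this forces $\sup_n\mu_n(U)<1$, contradicting the winning assumption.

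I expect step $(B)$ to be the principal obstacle. The delicate point is that a periodic recurrent class could a priori place all of its mass in $U$ at a single phase even while containing non-$U$ states, so a generic "$U$ is not closed under $C$" argument does not suffice; I must instead exploit the specific escape-from-$q_2$ and perpetual-residue-in-$q_0$ structure of \figurename~\ref{fig:inf-mem} to exclude exactly this all-$U$ phase. Once that is done, the contradiction shows every almost-sure eventually synchronizing strategy for $\M$ requires infinite memory.
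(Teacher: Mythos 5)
Your architecture---pass to the finite product chain $\M(\alpha)$, split $\sup_n\mu_n(U)=1$ into finite-time attainment (A) versus attainment at a limit point (B), kill (A) by the perpetual positive residue in $q_0$, and kill (B) by analysing the recurrent/periodic structure---is exactly the paper's, and your case (A) and your quantitative wrap-up (a uniform $\eta>0$ with $\nu(U)\le 1-\eta$ over limit points, plus vanishing transient mass) are sound. The genuine gap is in the one step you yourself flag as the crux. Your backward trace in (B) is wrong as stated: from an all-$U$ phase, the preceding cyclic class is indeed forced to consist of $q_1$-states (only $q_1$ feeds $q_2$), but tracing back once more does \emph{not} force ``full concentration on $q_0$-states''; the predecessors of $q_1$-states are $q_0$- or $q_1$-states, and what is actually forced is the \emph{exclusion} of $q_0$-states, since a $q_0$-state in that class would inject a $q_0$-state into the next, all-$q_1$, class through its probability-$\frac{1}{2}$ self-loop. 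Worse, the clause ``which cannot recur'' is false: recurrent $q_0$-states not only can exist, they \emph{must} exist under any finite-memory strategy here---that is precisely the fact the paper's proof establishes and exploits. Executing your sketch literally would therefore derail at the decisive moment.

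The repair is short, in either direction. Backward (your route, fixed): by induction, \emph{every} cyclic class preceding the all-$U$ class consists purely of $q_1$-states; since the cyclic classes form a cycle of length $d$ (the period), going back $d$ steps returns to the all-$U$ class itself, which would then have to consist of $q_1$-states---contradiction. Forward (the paper's route, which sidesteps entirely the delicacy about a single all-$U$ phase): since $\alpha$ is almost-sure but not sure winning, some $q_2$-state is recurrent in $\M(\alpha)$; all its successors are $q_0$-states (every action from $q_2$ leads to $q_0$), so its recurrence class contains a recurrent $q_0$-state; and since $q_0$ is a successor of itself under every action, that $q_0$-state propagates a $q_0$-state into each subsequent cyclic class, hence every periodic class contains one. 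Every limit point then places positive mass on $q_0$-states, i.e.\ outside $U$, which is exactly the bound $\nu(U)\le 1-\eta$ your argument needs.
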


\begin{proof}
Consider the MDP $\M$ shown in~\figurename~\ref{fig:inf-mem}. We argued in 
Section~\ref{sec:almost-eventually} that $q_0 \in \winas{event}(q_2)$ 
and we now show that infinite memory is necessary from $q_0$ 
for almost-sure eventually synchronizing in $q_2$.

Assume towards contradiction that there exists a finite-memory
strategy $\alpha$ that is almost-sure eventually synchronizing in
$q_2$. Consider the Markov chain $\M(\alpha)$ (the product of the MDP
$\M$ with the finite-state transducer defining $\alpha$). A state
$(q,m)$ in $\M(\alpha)$ is called a \emph{$q$-state}. Since $\alpha$
is almost-sure eventually synchronizing (but is not sure eventually
synchronizing) in $q_2$ , there is a $q_2$-state in the recurrent states
of $\M(\alpha)$.  Since on all actions $q_0$ is a successor of $q_2$,
and $q_0$ is a successor of itself, it follows that there is a
recurrent $q_0$-state in $\M(\alpha)$, and that all periodic classes
of recurrent states in $\M(\alpha)$ contain a $q_0$-state. Hence, in
each stationary distribution there is a $q_0$-state with a positive
probability, and therefore the probability mass in $q_0$ is bounded
away from zero. It follows that the probability mass in $q_2$ is
bounded away from $1$ thus $\alpha$ is not almost-sure eventually
synchronizing in $q_2$, a contradiction.  \qed
\end{proof}

\begin{figure}[t]
\begin{center}
\begin{picture}(60,26)

\node[Nmarks=i,iangle=180](n0)(10,10){$q_0$}
\node[Nmarks=n](n1)(35,10){$q_1$}
\node[Nmarks=n](n2)(55,10){$q_2$}

\drawedge[ELdist=.5](n0,n1){$a,b: \sfrac{1}{2}$}
\drawloop[ELside=l,loopCW=y, loopangle=-90, loopdiam=4](n0){$a,b:\sfrac{1}{2}$}

\drawedge(n1,n2){$b$}
\drawloop[ELside=l,loopCW=y, loopangle=-90, loopdiam=4](n1){$a$}

\drawedge[ELpos=50, ELdist=.5, ELside=r, curvedepth=-10](n2,n0){$a,b$}

\end{picture}
\caption{An MDP where infinite memory is necessary for
almost-sure eventually synchronizing strategies.}\label{fig:inf-mem}
\end{center}
\end{figure}
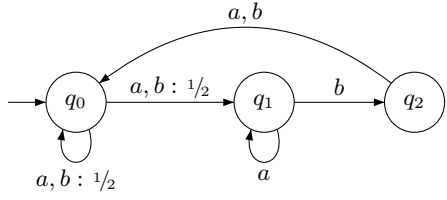


It turns out that in general, 
almost-sure eventually synchronizing strategies can be constructed from a family
of limit-sure eventually synchronizing strategies if we can also ensure that the
probability mass remains in the winning region (as in the MDP
in~\figurename~\ref{fig:inf-mem}).
We present a characterization of the winning region for almost-sure
winning based on an extension of the limit-sure eventually
synchronizing objective \emph{with exact support}.  This objective
requires to ensure probability arbitrarily close to $1$ in the target
set $T$, and moreover that after the same number of
steps the support of the probability distribution is contained in a
given set $U$.  Formally, given an MDP $\M$, let
$\winlim{event}(\fsum_T,U)$ for $T \subseteq U$ be the set of all
initial distributions such that for all $\epsilon>0$ there exists a
strategy~$\alpha$ and $n \in \nat$ such that $\M^{\alpha}_n(T) \geq
1-\epsilon$ and $\M^{\alpha}_n(U)=1$.
We say that $\alpha$ is limit-sure eventually synchronizing in $T$ with support in $U$.

We will present an algorithmic solution to limit-sure eventually
synchronizing objectives with exact support in Section~\ref{sec:limit-sure}.
Our characterization of the winning region for almost-sure winning is
as follows.


\begin{lemma}\label{lem: almost-limit-reduce-limit-event}
Let $\M$ be an MDP and $T$ be a target set. For all states $q_0$, 
we have $q_0\in \winas{event}(\fsum_T)$
if and only if
there exists a set~$U$ such that:
  \begin{itemize}
    \item $q_0 \in \winsure{event}(\fsum_U)$, and \smallskip
    \item $d_U \in \winlim{event}(\fsum_{T},U)$ where $d_U$ is the uniform distribution over~$U$.
  \end{itemize}
\end{lemma}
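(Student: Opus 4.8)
The statement is a characterization, so I would prove the two implications separately. The plan is to show that the existence of a suitable $U$ is both sufficient and necessary for $q_0 \in \winas{event}(\fsum_T)$.

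**The ``if'' direction (sufficiency).**
Assume there is a set $U$ with $q_0 \in \winsure{event}(\fsum_U)$ and $d_U \in \winlim{event}(\fsum_T,U)$. The idea is to build an almost-sure eventually synchronizing strategy in $T$ by iterating limit-sure strategies, mimicking the argument given informally for the MDP in \figurename~\ref{fig:inf-mem}. First I would use the sure-winning strategy for $U$ to drive, in some fixed number $n_0$ of steps, all the probability mass into $U$, so that $\M^{\alpha}_{n_0}(U)=1$. At this point the current distribution has support contained in $U$, and since the transition structure (only supports matter) makes the uniform distribution $d_U$ a ``worst case'' for the support, I would argue that from any distribution with support in $U$ the limit-sure-with-support property carries over: for every $\epsilon>0$ there is a strategy and a number of steps reaching probability at least $1-\epsilon$ in $T$ \emph{while keeping the whole mass in $U$}. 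The key point is that the residual mass $\le \epsilon$ stays inside $U$, so we can \emph{restart}: choose a decreasing sequence $\epsilon_k \to 0$ and concatenate the corresponding finite strategies, each time pushing at least $1-\epsilon_k$ into $T$ at some step $n_k$ and leaving the remainder in $U$ to be reprocessed. Taking $\sup_n \M^{\alpha}_n(T) = 1$ over this concatenation yields almost-sure eventually synchronizing in $T$. I would be careful to phrase this as producing a single strategy whose outcome is, for every $\epsilon$, eventually $(1-\epsilon)$-synchronizing, matching the definition in Table~\ref{tab:def-modes}.

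**The ``only if'' direction (necessity).**
Conversely, assume $q_0 \in \winas{event}(\fsum_T)$ via some strategy $\alpha$. I must exhibit a witnessing set $U$. The natural candidate is the set of states that carry ``recurrent'' probability mass under $\alpha$ near the synchronizing times, i.e. roughly the support of the distributions $\M^{\alpha}_{n}$ along an infinite subsequence of steps where the mass in $T$ approaches $1$. Concretely, since $\alpha$ is almost-sure but the target mass may never equal $1$, there is an increasing sequence of times $n_k$ with $\M^{\alpha}_{n_k}(T)\to 1$; I would define $U$ from the supports $\Supp(\M^{\alpha}_{n_k})$, taking a support that recurs infinitely often (finitely many subsets of $Q$ exist, so some support repeats). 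I then need to check $q_0 \in \winsure{event}(\fsum_U)$ --- that all mass can be surely driven into this recurring support in finitely many steps --- and that $d_U \in \winlim{event}(\fsum_T,U)$, using that from the recurring support the strategy $\alpha$ (suitably shifted) again pushes mass arbitrarily close to $1$ into $T$ while keeping support inside $U$.

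**Main obstacle.**
The hard part will be the necessity direction, specifically the correct choice of $U$ and the verification of the \emph{sure}-winning condition $q_0 \in \winsure{event}(\fsum_U)$: almost-sure winning only guarantees probabilistic, not sure, convergence, so I must argue that the support dynamics (which are deterministic in the alternating/$\Pre$ sense) force \emph{all} paths into $U$ synchronously at some fixed step, not merely a large fraction of the mass. I expect this to require an analysis of the Markov chain $\M(\alpha)$ for a candidate finite-memory approximation together with a compactness/pigeonhole argument on supports, combined with Lemma~\ref{lem:sure-ss-pre} to reinterpret the sure condition as membership in $\Pre^n(U)$. Establishing that the limit-sure-with-support strategies can be made to preserve support exactly in $U$ (rather than merely in some larger set) will be the delicate technical point.
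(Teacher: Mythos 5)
Your two-direction decomposition coincides with the paper's: sufficiency by surely driving the mass into $U$ and then concatenating limit-sure-with-support strategies for $\epsilon_k \to 0$ (restarting is possible precisely because the residual mass stays in $U$), and necessity by taking times $n_k$ with $\M^{\alpha}_{n_k}(T) \geq 1-2^{-k}$ and extracting, by pigeonhole, a support $U$ that recurs infinitely often. The gap is in how you assess the necessity direction. What you call the ``main obstacle'' --- verifying $q_0\in\winsure{event}(\fsum_U)$ --- is in fact immediate from your own choice of $U$, and the machinery you propose for it would fail: an almost-sure eventually synchronizing strategy may require \emph{infinite} memory (Lemma~\ref{lem:inf-mmeory-almost-event}), so there is no ``candidate finite-memory approximation'' of $\alpha$ whose Markov chain $\M(\alpha)$ you could analyze. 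The confusion is between the traditional (path-based) and the symbolic notion of sure winning: for the symbolic objective, $q_0\in\winsure{event}(\fsum_U)$ asks only for a strategy whose symbolic outcome contains some distribution that is $1$-synchronized in $U$, i.e.\ $\M^{\alpha}_n(U)=1$ for some $n$; and ``all positive-probability paths are synchronously in $U$ at step $n$'' is literally the same statement as $\M^{\alpha}_n(U)=1$. Since $U=\Supp(\M^{\alpha}_{m_1})$ for one of your recurring times $m_1$, you have $\M^{\alpha}_{m_1}(U)=1$ by the very definition of support, so $\alpha$ itself is the required witness; no appeal to $\Pre$-iteration (Lemma~\ref{lem:sure-ss-pre}), compactness, or support dynamics is needed.

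The same remark disposes of your other ``delicate point'', preserving the support exactly in $U$: at every later recurring time $m_k$ the support is again exactly $U$, so the strategy obtained from $\alpha$ by shifting the time origin to $m_1$ ensures mass at least $1-2^{-k}$ in $T$ and mass exactly $1$ in $U$ at step $m_k-m_1$; hence $\M^{\alpha}_{m_1}\in\winlim{event}(\fsum_T,U)$, and this transfers to the uniform distribution $d_U$ by the support-only dependence of the winning regions (Corollary~\ref{col:uniform-dist-limit}), the same fact you already invoke in the sufficiency direction. One last repair: your claim that almost-sure winning yields an \emph{increasing} sequence of times $n_k$ with $\M^{\alpha}_{n_k}(T)\to 1$ is not quite right when the supremum is attained (the mass can equal $1$ at one step and then drop forever). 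Either treat that case separately by taking $U=T$, for which $d_T\in\winlim{event}(\fsum_T,T)$ holds trivially at step $0$, or observe that the pigeonhole argument never needs the times $n_k$ to be distinct.
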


\begin{proof}
First, if $q_0 \in \winas{event}(\fsum_T)$,
then there is a strategy $\alpha$ such that $\sup_{n \in \nat} \M^{\alpha}_n(T)=1$.
Then either $\M^{\alpha}_n(T) = 1$ for some $n \geq 0$,
or $\limsup_{n\to \infty} \M^{\alpha}_n(T)=1$. 
If $\M^{\alpha}_n(T) = 1$, then $q_0$ is sure winning 
for eventually synchronizing in $T$, thus $q_0 \in \winsure{event}(\fsum_T)$
and we can take $U = T$. 
Otherwise,
for all $i>0$ there exists $n_i \in \nat$ such that 
$\M^{\alpha}_{n_i}(T) \geq 1-2^{-i}$, and moreover $n_{i+1} > n_i$ for all $i>0$.
Let $s_i = \Supp(\M^{\alpha}_{n_i})$ be the support of~$\M^{\alpha}_{n_i}$.
Since the state space is finite, there is a set~$U$ that 
occurs infinitely often in the sequence~$s_0 s_1 \dots$,
thus for all $k>0$ there exists $m_k \in \nat$ such that 
$\M^{\alpha}_{m_k}(T) \geq 1-2^{-k}$ and 
$\M^{\alpha}_{m_k}(U) = 1$.
It follows that $\alpha$ is sure eventually synchronizing in~$U$ 
from $q_0$, hence $q_0 \in \winsure{event}(\fsum_U)$.
Moreover $\M$ with intial distribution $d_1 = \M^{\alpha}_{m_1}$
is limit-sure eventually synchronizing in $T$ with exact support in $U$.
Since $\Supp(d_1) = U = \Supp(d_U)$, it follows 
by Corollary~\ref{col:uniform-dist-limit} that 
$d_U \in \winlim{event}(\fsum_{T},U)$.

To establish the converse, note that since $d_U \in
\winlim{event}(\fsum_{T},U)$, it follows from
Corollary~\ref{col:uniform-dist-limit} that from all initial
distributions with support in $U$, for all $\epsilon > 0$ there exists
a strategy $\alpha_{\epsilon}$ and a position $n_{\epsilon}$ such that
$\M^{\alpha_{\epsilon}}_{n_{\epsilon}}(T) \geq 1-\epsilon$ and
$\M^{\alpha_{\epsilon}}_{n_{\epsilon}}(U) = 1$.  We construct an
almost-sure limit eventually synchronizing strategy $\alpha$ as
follows.  Since $q_0 \in \winsure{event}(\fsum_U)$, play according to
a sure eventually synchronizing strategy from $q_0$ until all the
probability mass is in $U$.  Then for $i=1,2, \dots$ and $\epsilon_i =
2^{-i}$, repeat the following procedure: given the current probability distribution, 
select the corresponding strategy $\alpha_{\epsilon_i}$
and play according to $\alpha_{\epsilon_i}$ for $n_{\epsilon_i}$
steps, ensuring probability mass at least $1-2^{-i}$ in $T$, and since
after that the support of the probability mass is again in $U$,
%
play according to $\alpha_{\epsilon_{i+1}}$ for $n_{\epsilon_{i+1}}$
steps, etc.  This strategy $\alpha$ ensures that $\sup_{n\in\nat}
\M^{\alpha}_n(T)=1$ from $q_0$, hence $q_0\in \winas{event}(\fsum_T)$.
\qed
\end{proof}

Note that from Lemma~\ref{lem: almost-limit-reduce-limit-event}, it follows
that counting strategies are sufficient
to win almost-sure eventually synchronizing objective
(a strategy is \emph{counting} if $\alpha(\rho) = \alpha(\rho')$
for all prefixes $\rho, \rho'$ with the same length and $\Last(\rho) = \Last(\rho')$).

As we show in Section~\ref{sec:limit-sure} that the membership problem
for limit-sure eventually synchronizing with exact support can be solved
in PSPACE, it follows from the characterization in Lemma~\ref{lem: almost-limit-reduce-limit-event} 
that the membership problem for almost-sure eventually synchronizing
is in PSPACE, using the following (N)PSPACE algorithm: 
guess the set $U$, and check that $q_0 \in \winsure{event}(\fsum_U)$, and that 
$d_U \in \winlim{event}(\fsum_{T},U)$
where $d_U$ is the uniform distribution over~$U$ (this can be done
in PSPACE by Theorem~\ref{theo:sure-eventually-pspace-c}
and Theorem~\ref{theo:limit-sure-eventually}).
We present a matching lower bound.
 

\begin{lemma}
The membership problem for $\winas{event}(\fsum_T)$ is PSPACE-hard
even if $T$ is a singleton.
\end{lemma}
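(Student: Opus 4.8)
The plan is to reduce from the emptiness problem for one-letter alternating automata, which is PSPACE-complete~\cite{Holzer95,AFA1} and, by Lemma~\ref{lem:sure-ss-pre}, coincides with the sure eventually synchronizing problem. The guiding idea is to force the almost-sure mode to \emph{collapse} onto the sure mode by making the singleton target state \emph{leaky}: if the only way probability can leave the target is into an absorbing sink, then gathering mass arbitrarily close to~$1$ in the target is impossible unless the full mass~$1$ is accumulated at a single step, which is exactly sure winning.

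Given a 1L-AFA $\A = \tuple{Q,\delta_{\A},\F}$ and a state $q_0$, I would first build the MDP $\M_{\A}$ from the correspondence between 1L-AFA and MDPs, so that $\Pre_{\M_{\A}}^{n}(\F) = Acc_{\A}(n,\F)$ for all $n$. Then I would extend it to an MDP $\M'$ by adding a fresh target state $q_f$, a fresh absorbing sink $q_{\bot}$, and a fresh action $\sharp$ with $\delta(q,\sharp) = q_f$ for every $q \in \F$, $\delta(q,\sharp) = q_{\bot}$ for every other state~$q$, and $\delta(q_f,a) = \delta(q_{\bot},a) = q_{\bot}$ for all actions~$a$ (so that $q_f$ is leaky and $q_{\bot}$ is absorbing). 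The singleton target for the almost-sure objective is $\{q_f\}$, and the construction is clearly polynomial.

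The heart of the argument is the following collapse claim: whenever a target state $q_f$ sends, on every action, all its probability to an absorbing sink $q_{\bot}$, one has $q_0 \in \winas{event}(q_f)$ iff $q_0 \in \winsure{event}(q_f)$. The nontrivial direction follows by noting that $(\M^{\alpha}_n(q_{\bot}))_n$ is nondecreasing and $\M^{\alpha}_{n+1}(q_{\bot}) \geq \M^{\alpha}_n(q_f)$, so for every $m > n$ we get $\M^{\alpha}_m(q_f) \leq 1 - \M^{\alpha}_n(q_f)$. Picking $n$ with $\M^{\alpha}_n(q_f) > \tfrac{1}{2}$ (which exists if $\sup_n \M^{\alpha}_n(q_f) = 1$) shows that the supremum is attained at some step $n^{\ast} \leq n$ and equals~$1$; hence $\alpha$ is in fact sure winning.

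Finally I would verify correctness. If $\LL(\A_{q_0}) \neq \emptyset$, then $q_0 \in \Pre_{\M_{\A}}^{n}(\F)$ for some $n$, and the sure eventually synchronizing strategy in $\F$ (Lemma~\ref{lem:sure-ss-pre}) followed by one $\sharp$-move places all the mass in $q_f$ at step $n+1$, so $q_0 \in \winsure{event}(q_f) \subseteq \winas{event}(q_f)$. Conversely, since the only predecessors of $q_f$ are the states of $\F$ via $\sharp$ (and any premature $\sharp$-move irretrievably loses mass to $q_{\bot}$), sure winning in $q_f$ requires all mass to be in $\F$ at some step $m-1$, i.e. $q_0 \in \Pre_{\M_{\A}}^{m-1}(\F) = Acc_{\A}(m-1,\F)$, forcing $\LL(\A_{q_0}) \neq \emptyset$; by the collapse claim the same holds for almost-sure winning. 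The main obstacle is precisely this ``no'' direction: one must rule out that the probabilistic branching of $\M_{\A}$ lets some clever infinite-memory strategy pump mass into $q_f$ across several steps. This is exactly what the leaky-sink collapse forbids, since mass reaching $q_f$ is lost forever to $q_{\bot}$, so mass near~$1$ in $q_f$ can occur at most once along any run.
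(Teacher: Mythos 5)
Your proposal is correct and takes essentially the same route as the paper: the paper reduces from the membership problem for sure eventually synchronizing (which, by Lemma~\ref{lem:sure-ss-pre}, is the same as 1L-AFA emptiness, your starting point) and uses exactly your gadget---a fresh action $\sharp$ that sends the designated target state(s) to a fresh state $\hat{p}$ and all other states to an absorbing $\sink$, with $\hat{p}$ itself leaking into $\sink$ on every action. Your explicit monotonicity argument (mass in $\sink$ is nondecreasing and dominates the previous mass in the target, forcing the supremum $1$ to be attained) is a carefully spelled-out version of the paper's observation that $\hat{p}$ is transient under all strategies, so probability cannot accumulate in it across steps and almost-sure winning collapses to sure winning.
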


\begin{proof}
The proof is by a reduction from the membership problem for sure eventually
synchronization, which is PSPACE-complete by Theorem~\ref{theo:sure-eventually-pspace-c}.
Given an MDP $\M=\tuple{Q, \Act,\delta}$, an initial state $q_0 \in Q$,
and a state $\q \in Q$, we construct an MDP $\N=\tuple{Q',\Act',\delta'}$ and
a state $\p \in Q'$ such that $q_0 \in \winsure{event}(\q)$ in $\M$
if and only if $q_0 \in \winas{event}(\p)$ in $\N$.
The MDP $\N$ is a copy
of $\M$ with two new states $\p$ and $\sink$ reachable only by a new action $\sharp$
(see \figurename~\ref{fig:almost-ss-reduction}).
Formally, $Q' = Q \cup \{\p, \sink\}$ and $\Act'= \Act \cup \{\sharp\}$,
and the transition function $\delta'$ is defined as follows, for all $q \in Q$:
$\delta'(q,a) = \delta(q,a)$ for all $a \in \Act$,
$\delta'(q,\sharp)(\sink) = 1$ if $q \neq \q$, and $\delta'(\q,\sharp)(\p) = 1$;
finally, for all $a \in \Act'$, let $\delta'(\p,a)(\sink) = \delta'(\sink,a)(\sink) = 1$.

\begin{figure}[t]
\begin{center}
    \begin{picture}(115,36)(0,2)

\node[Nmarks=n, Nw=40, Nh=14, dash={0.2 0.5}0](m1)(20,24){}
\node[Nframe=n](label)(8,28){MDP $\M$}
\node[Nmarks=n](n1)(28,22){$\q$}
\node[Nmarks=n](n2)(10,22){$q$}
\node[Nframe=n](arrow)(45,22){{\Large $\Rightarrow$}}

\node[Nmarks=n, Nw=40, Nh=14, dash={0.2 0.5}0](nm1)(80,24){}
\node[Nmarks=n, Nw=50, Nh=36, dash={0.4 1}0](m2)(80,20){}
\node[Nframe=n](label)(65,34){MDP $\N$}

\node[Nframe=n](label)(68,28){MDP $\M$}
\node[Nmarks=n](n1)(88,22){$\q$}
\node[Nmarks=n](n2)(70,22){$q$}

\node[Nmarks=n](end)(70,8){$\sink$}
\node[Nmarks=n](qq)(88,8){$\p$} 

\drawloop[ELside=l,loopCW=y, loopangle=180, loopdiam=4](end){$\Act'$}

\drawedge[ELpos=50, ELside=l, curvedepth=0](n2,end){$\sharp$}
\drawedge[ELpos=50, ELside=r, curvedepth=0](n1,qq){$\sharp$}
\drawedge(qq,end){$\Act'$}

\end{picture}
\end{center}
 \caption{Sketch of the reduction to show PSPACE-hardness of 
 the membership problem for almost-sure eventually synchronizing.}\label{fig:almost-ss-reduction}
\end{figure}
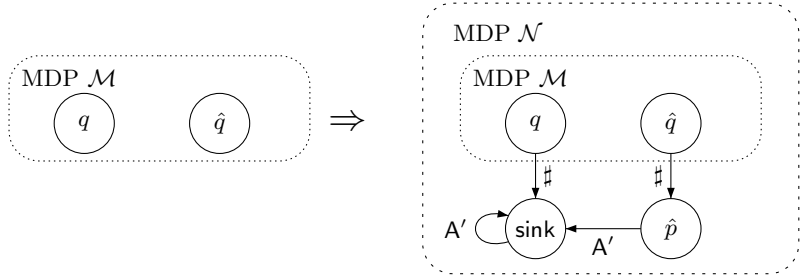

The goal is that $\N$ simulates $\M$ until the action $\sharp$ is played in $\q$
to move the probability mass from $\q$ to $\p$,
ensuring that if $\M$ is sure-winning for eventually synchronizing in $\q$, then
$\N$ is also sure-winning (and thus almost-sure winning) for eventually synchronizing in $\p$. Moreover,
the only way to be almost-sure eventually synchronizing in $\p$ is to have
probability~$1$ in $\p$ at some point, because the state $\p$ is transient
under all strategies, thus the probability mass cannot accumulate and tend to $1$
in $\p$ in the long run. It follows that (from all initial states
$q_0$) $\M$ is sure-winning for eventually
synchronizing in $\q$ if and only if $\N$ is almost-sure winning
for eventually synchronizing in $\p$. It follows from this reduction that
the membership problem for almost-sure eventually synchronizing objective
is PSPACE-hard.
\qed
\end{proof}




\noindent The results of this section are summarized as follows.
 
\begin{theorem}\label{theo:almost-sure-eventually}
For almost-sure eventually synchronizing  in MDPs:

\begin{enumerate}
\item (Complexity). The membership problem is PSPACE-complete.

\item (Memory). Infinite memory is necessary in general for both pure 
and randomized strategies, and pure strategies are sufficient.
\end{enumerate}

\end{theorem}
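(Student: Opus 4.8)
The plan is to assemble this summary theorem from the results already established in this section, together with the PSPACE upper bound for limit-sure eventually synchronizing with exact support (Theorem~\ref{theo:limit-sure-eventually}); no new construction is needed.

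For the complexity claim I would first derive the PSPACE upper bound from the characterization of Lemma~\ref{lem: almost-limit-reduce-limit-event}: a state $q_0$ lies in $\winas{event}(\fsum_T)$ if and only if there is a set $U \supseteq T$ with $q_0 \in \winsure{event}(\fsum_U)$ and $d_U \in \winlim{event}(\fsum_T,U)$, where $d_U$ is uniform over $U$. This immediately yields a nondeterministic polynomial-space procedure: guess the subset $U \subseteq Q$ (polynomial space), then verify the two conditions — the first in PSPACE by Theorem~\ref{theo:sure-eventually-pspace-c}, the second in PSPACE by Theorem~\ref{theo:limit-sure-eventually}. The whole procedure therefore runs in NPSPACE, which equals PSPACE by Savitch's theorem. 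The matching lower bound is precisely the lemma immediately preceding this theorem, which reduces the membership problem for sure eventually synchronizing (PSPACE-complete by Theorem~\ref{theo:sure-eventually-pspace-c}) to almost-sure eventually synchronizing via the gadget adding the transient state $\p$ and the absorbing state $\sink$, and this reduction already produces a singleton target. Combining the two bounds gives PSPACE-completeness.

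For the memory part, infinite memory is shown necessary by Lemma~\ref{lem:inf-mmeory-almost-event} on the MDP of Figure~\ref{fig:inf-mem}; since that witness uses only the singleton target $q_2$ and the argument rules out every finite-memory strategy, the bound holds in general for both pure and randomized strategies. For sufficiency of pure strategies I would inspect the strategy constructed in the converse direction of the proof of Lemma~\ref{lem: almost-limit-reduce-limit-event}: it first plays a sure eventually synchronizing strategy from $q_0$ to drive all the probability mass into $U$, and then, for $\epsilon_i = 2^{-i}$, it successively switches to strategies $\alpha_{\epsilon_i}$ that guarantee mass at least $1-\epsilon_i$ in $T$ while restoring the support to $U$ after $n_{\epsilon_i}$ steps. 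The sure-winning phase can be taken pure by Theorem~\ref{theo:sure-eventually-pspace-c}, and each limit-sure phase $\alpha_{\epsilon_i}$ can be taken pure by the purity guarantee of Theorem~\ref{theo:limit-sure-eventually}; concatenating pure phases yields a pure (indeed counting) strategy, establishing that pure strategies suffice.

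The main obstacle is not conceptual but a matter of making the purity and well-definedness of the concatenated strategy rigorous: one must check that at every finite history the active phase and the pure sub-strategy to be consulted are determined by the history alone. This holds because the phase boundaries occur at the deterministic step counts $n_{\epsilon_i}$ fixed in advance, so the switching points do not depend on the probabilistic outcomes, and the resulting map $\Pref(\M) \to \Act$ is a genuine pure strategy. All the substantive work — the characterization, the hardness reduction, and the infinite-memory witness — is carried by the earlier lemmas and by Theorem~\ref{theo:limit-sure-eventually}.
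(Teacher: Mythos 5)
Your proposal is correct and follows essentially the same route as the paper: the PSPACE upper bound via the characterization of Lemma~\ref{lem: almost-limit-reduce-limit-event} (guess $U$, verify with Theorem~\ref{theo:sure-eventually-pspace-c} and Theorem~\ref{theo:limit-sure-eventually}), the matching lower bound via the preceding reduction with the transient state $\p$ and absorbing $\sink$, and the memory claims via Lemma~\ref{lem:inf-mmeory-almost-event} together with purity of the phase-switching (counting) strategy built in the converse direction of the characterization. Your added observation that the phase boundaries occur at step counts fixed in advance, so the concatenation is a well-defined pure strategy, is exactly the point behind the paper's remark that counting strategies suffice.
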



\subsection{Limit-sure eventually synchronization}  \label{sec:limit-sure}


In this section, we present the algorithmic solution for 
limit-sure eventually synchronizing with exact support.
Note that the limit-sure eventually synchronizing objective is
a special case where the support is the state space of the MDP.
Consider the MDP in \figurename~\ref{fig:almost-limit-eventually-differ}
which is limit-sure eventually synchronizing in $\{q_2\}$, 
as shown in Lemma~\ref{lem:dif-in-def}. For $i=0,1,\dots$, the sequence $\Pre^i(T)$ 
of predecessors of $T = \{q_2\}$ is ultimately periodic: 
$\Pre^0(T) = \{q_2\}$, and $\Pre^i(T) = \{q_1\}$ for all $i \geq 1$. 
Given $\epsilon > 0$, a strategy to get probability $1-\epsilon$ in $q_2$
first accumulates probability mass in the \emph{periodic} subsequence of predecessors (here $\{ q_1 \}$),
and when the probability mass is greater than $1-\epsilon$ in $q_1$, the
strategy injects the probability mass in $q_2$ (through the aperiodic prefix of
the sequence of predecessors). This is the typical shape of a limit-sure eventually 
synchronizing strategy. Note that in this scenario, the MDP is also limit-sure eventually synchronizing 
in every set $\Pre^i(T)$ of the sequence of predecessors.
A special case is when it is possible to get probability~$1$
in the sequence of predecessors after finitely many steps. 
In this case, the probability mass injected in $T$ is $1$ and the MDP is even sure-winning. 
The algorithm for deciding limit-sure eventually synchronization relies
on the above characterization, generalized in Lemma~\ref{lem:lse-pre} 
to limit-sure eventually synchronizing with exact support, saying that 
limit-sure eventually synchronizing in~$T$ with support in~$U$ 
is equivalent to either limit-sure eventually synchronizing in $\Pre^k(T)$ with support in $\Pre^k(U)$
(for arbitrary~$k$), or sure eventually synchronizing in~$T$ (and therefore also in~$U$).

\begin{lemma} \label{lem:lse-pre}

For $T \subseteq U$ and $k\geq 0$, let $R = \Pre^{k}(T)$ and $Z = \Pre^{k}(U)$.  
Then,
$\winlim{event}(\fsum_T,U) = \winsure{event}(\fsum_T) \cup
\winlim{event}(\fsum_R,Z)$.
\end{lemma}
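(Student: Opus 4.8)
The plan is to reduce the general statement to the case $k=1$ and then prove the two inclusions of
\[
\winlim{event}(\fsum_T,U) \;=\; \winsure{event}(\fsum_T)\,\cup\,\winlim{event}(\fsum_{\Pre(T)},\Pre(U)).
\]
For the reduction, I would induct on $k$, applying the $k=1$ identity to the pair $(\Pre^{k}(T),\Pre^{k}(U))$ and combining it with the induction hypothesis. This produces an extra term $\winsure{event}(\fsum_{\Pre^{k}(T)})$, which I would absorb into $\winsure{event}(\fsum_T)$: by Lemma~\ref{lem:sure-ss-pre} a state is in $\winsure{event}(\fsum_{\Pre^{k}(T)})$ iff it lies in $\Pre^{n}(\Pre^{k}(T))=\Pre^{n+k}(T)$ for some $n$, hence it is in $\winsure{event}(\fsum_T)$. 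So the only real work is the base case.

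For the inclusion $\supseteq$ (the forward, easy direction), note first that $\winsure{event}(\fsum_T)\subseteq\winlim{event}(\fsum_T,U)$ is immediate since a distribution with $\M^{\alpha}_n(T)=1$ has $\M^{\alpha}_n(U)=1$ (as $T\subseteq U$) and trivially $1\ge 1-\epsilon$. For $\winlim{event}(\fsum_{\Pre(T)},\Pre(U))\subseteq\winlim{event}(\fsum_T,U)$, given $\epsilon>0$ I would take a strategy reaching, at some step $n$, mass $\ge 1-\epsilon$ in $\Pre(T)$ with support in $\Pre(U)$, and extend it by one step: from each state $q$ in the support, play an action witnessing $q\in\Pre(T)$ when $q\in\Pre(T)$ (forcing the successor into $T\subseteq U$), and an action witnessing $q\in\Pre(U)$ otherwise. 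Then at step $n+1$ the mass $\ge 1-\epsilon$ lands in $T$ and all mass lands in $U$.

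The inclusion $\subseteq$ is the main obstacle. Here I would fix $\eta$ to be the smallest positive transition probability of $\M$ and exploit it quantitatively. Take $d\in\winlim{event}(\fsum_T,U)$ with $d\notin\winsure{event}(\fsum_T)$; then $d(T)<1$. Given $\epsilon>0$, set $\epsilon'=\min(\eta\epsilon,\tfrac{1}{2}(1-d(T)))$ and pick a strategy $\alpha$ and step $n$ with $\M^{\alpha}_n(T)\ge 1-\epsilon'$ and $\M^{\alpha}_n(U)=1$; since $\epsilon'<1-d(T)=1-\M^{\alpha}_0(T)$, necessarily $n\ge 1$. Now I analyse the distribution $\M^{\alpha}_{n-1}$. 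For the support: any state $q$ with $\M^{\alpha}_{n-1}(q)>0$ is reached by a positive-probability history, and since no mass escapes $U$ at step $n$, every action played from $q$ satisfies $\post(q,a)\subseteq U$, so $q\in\Pre(U)$; hence $\Supp(\M^{\alpha}_{n-1})\subseteq\Pre(U)$. For the mass in $\Pre(T)$: the key observation is that from any state $q\notin\Pre(T)$, every action $a$ has a successor outside $T$ of probability at least $\eta$, so under any action distribution the probability of leaving $T$ in one step is at least $\eta$. Writing $p=\M^{\alpha}_{n-1}(\Pre(T))$, the mass landing outside $T$ at step $n$ is therefore at least $\eta(1-p)$, and this is bounded by $\M^{\alpha}_n(Q\setminus T)\le\epsilon'$. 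Thus $1-p\le\epsilon'/\eta\le\epsilon$, giving $\M^{\alpha}_{n-1}(\Pre(T))\ge 1-\epsilon$ with support in $\Pre(U)$, i.e.\ $d\in\winlim{event}(\fsum_{\Pre(T)},\Pre(U))$.

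The crux, and the step I expect to be delicate, is precisely this leakage bound: the naive idea of reading off the distribution one step before $n$ fails because $\Pre(T)$ demands a state force \emph{all} its mass into $T$, whereas a single ``leaky'' state can send almost all its mass to $T$ while lying outside $\Pre(T)$. The use of the uniform lower bound $\eta$ on positive transition probabilities is what converts ``almost all mass reaches $T$'' into ``almost all mass was already in $\Pre(T)$'', and tying $\epsilon'$ to $\eta$ is what makes the estimate close the argument.
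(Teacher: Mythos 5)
Your proof is correct, and in the hard inclusion it takes a genuinely different route from the paper's. The paper proves the lemma for arbitrary $k$ in one shot: given witnesses $(\alpha_i,n_i)$ for $\epsilon_i=1/i$, it splits on whether the set of witness times $\{n_i\}$ is bounded. In the bounded case it runs a pigeonhole argument over the finitely many pure strategies on paths of bounded length to extract a single strategy $\beta$ and time $n$ with $\M^{\beta}_n(T)=1$, concluding sure winning; in the unbounded case it assumes $n_i\geq k$ and steps back $k$ steps at once, the bound $\eta^k$ playing the role your $\eta$ plays. You instead split on sure versus not-sure winning: not sure winning gives $d(T)<1$, and the choice $\epsilon'\leq\min(\eta\epsilon,\tfrac12(1-d(T)))$ forces every witness time to be positive, so you can always step back one step; induction on $k$, with the absorption $\winsure{event}(\fsum_{\Pre^k(T)})\subseteq\winsure{event}(\fsum_T)$, finishes. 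Your organization avoids the compactness/pigeonhole argument entirely, which is a real simplification; the paper's version buys the additional fact that bounded witness times already imply sure winning, and it needs no induction or absorption step. Two small points you should make explicit: (1) Lemma~\ref{lem:sure-ss-pre} is stated for Dirac initial distributions, so for the absorption over arbitrary initial distributions you need its immediate generalization, namely that $\mu_0$ is sure winning for eventually synchronizing in $T'$ iff $\Supp(\mu_0)\subseteq\Pre^n(T')$ for a single $n$; (2) the $k=1$ identity must be stated for an arbitrary pair $T'\subseteq U'$, since the induction applies it to $(\Pre^k(T),\Pre^k(U))$ --- this is fine because $\Pre$ is monotone, but it deserves a sentence. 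Your easy inclusion is the paper's, specialized to one step.
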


\begin{proof}
The proof is in two parts. First we show that 
$\winsure{event}(\fsum_T) \cup \winlim{event}(\fsum_R,Z) \subseteq \winlim{event}(\fsum_T,U)$:
since $T\subseteq U$, it follows from the definitions that 
$\winsure{event}(\fsum_T) \subseteq \winlim{event}(\fsum_T,U)$;
to show that $\winlim{event}(\fsum_R, Z) \subseteq \winlim{event}(\fsum_T, U)$
in an MDP $\M$,
let $\epsilon > 0 $ and
consider an initial distribution $\mu_0$ and a strategy $\alpha$ such that for 
some $i \geq 0$ we have $\M^{\alpha}_{i}(R)\geq 1-\epsilon$ and $\M^{\alpha}_i(Z) = 1$.
We construct a strategy~$\beta$ that plays like~$\alpha$ for the first~$i$
steps, and then since $R = \Pre^{k}(T)$ and $Z = \Pre^{k}(U)$ plays
from states in~$R$ according to a sure eventually synchronizing strategy with target~$T$,
and from states in $Z \setminus R$ according to a sure eventually synchronizing strategy with target~$U$
(such strategies exist by the proof of Lemma~\ref{lem:sure-ss-pre}).
The strategy $\beta$ ensures from $\mu_0$ that $\M^{\beta}_{i+k}(T) \geq 1-\epsilon$ 
and $\M^{\beta}_{i+k}(U) = 1$, showing that $\M$ is limit-sure eventually synchronizing 
in~$T$ with support in~$U$.

Second we show the converse inclusion, namely that 
$\winlim{event}(\fsum_T,U) \subseteq \winsure{event}(\fsum_T) \cup \winlim{event}(\fsum_R,Z)$.
Consider an initial distribution $\mu_0 \in \winlim{event}(\fsum_T,U)$ in the MDP $\M$
and for $\epsilon_i = \frac{1}{i}$ ($i \in \nat$) let $\alpha_i$ be a strategy 
and $n_i \in \nat$ such that $\M^{\alpha_i}_{n_i}(T)\geq 1-\epsilon_i$
 and $\M^{\alpha_i}_{n_i}(U) = 1$.
We consider two cases. $(a)$ If the set $\{n_i \mid i \geq 0\}$ is bounded,
then there exists a number $n$ that occurs infinitely often in the sequence
$(n_i)_{i \in \nat}$, and such that for all $i \geq 0$, there exists
a strategy $\beta_i$ such 
that $\M^{\beta_i}_{n}(T) \geq 1-\epsilon$ and $\M^{\beta_i}_{n}(U) = 1$.
Since $n$ is fixed, we can assume w.l.o.g. that the strategies $\beta_i$
are pure, and since there is a finite number of pure strategies
over paths of length at most $n$, it follows that there is a strategy $\beta$
that occurs infinitely often among the strategies $\beta_i$ and
such that for all $\epsilon > 0$ we have 
$\M^{\beta}_{n}(T) \geq 1-\epsilon$, hence $\M^{\beta}_{n}(T) = 1$,
showing that $\M$ is sure winning for eventually synchronizing in $T$, 
that is $\mu_0 \in \winsure{event}(\fsum_T)$.
$(b)$ otherwise, the set $\{n_i \mid i \geq 0\}$ is unbounded
and we can assume w.l.o.g. that $n_i \geq k$ for all $i \geq 0$.
We claim that the family of strategies $\alpha_i$ ensures 
limit-sure synchronization in $R = \Pre^{k}(T)$ with 
support in $Z = \Pre^{k}(U)$. Essentially this is because if the probability
in $T$ is close to $1$ after $n_i$ steps, then $k$ steps before 
the probability in $\Pre^{k}(T)$ must be close to $1$ as well.
%
%
Formally, we show that $\alpha_i$ is such that 
$\M^{\alpha_i}_{n_i - k}(R)\geq 1- \frac{\epsilon}{\eta^k}$ and $\M^{\alpha_i}_{n_i - k}(Z) = 1$
where $\eta$ is the smallest positive probability in the transitions of $\M$.
Towards contradiction, assume that $\M^{\alpha_i}_{n_i - k}(R)< 1- \frac{\epsilon}{\eta^k}$,
then $\M^{\alpha_i}_{n_i - k}(Q \setminus R) > \frac{\epsilon}{\eta^k}$ and from every
state $q \in Q \setminus R$, no matter which sequence of actions is played
by $\alpha_i$ for the next $k$ steps, there is a path from $q$ to a state
outside of $T$, thus with probability at least $\eta^k$. Hence the probability
in $Q \setminus T$ after $n_i$ steps is greater than 
$\frac{\epsilon}{\eta^k} \cdot \eta^k$, and it follows that 
$\M^{\alpha_i}_{n_i}(T)< 1-\epsilon$, in contradiction with the definition of $\alpha_i$.
This shows that $\M^{\alpha_i}_{n_i - k}(R)\geq 1- \frac{\epsilon}{\eta^k}$,
and an argument analogous to the proof of Lemma~\ref{lem:sure-ss-pre}
shows that $\M^{\alpha_i}_{n_i - k}(Z) = 1$.
It follows that $\mu_0 \in \winlim{event}(\fsum_R,Z)$ and the proof is complete.
\qed 
\end{proof}

Thanks to Lemma~\ref{lem:lse-pre}, since sure-winning is already
solved in Section~\ref{sec:sure-event-sync}, it suffices to solve the
limit-sure eventually synchronizing problem for target $R = \Pre^k(T)$
and support $Z = \Pre^k(U)$ with arbitrary $k$, instead of $T$ and
$U$. We can choose $k$ such that both $\Pre^k(T)$ and $\Pre^k(U)$ lie in
the periodic part of the sequence of pairs of predecessors
$(\Pre^i(T),\Pre^i(U))$.  We can assume that $k \leq 3^{\abs{Q}}$
since $\Pre^i(T) \subseteq \Pre^i(U) \subseteq Q$ for all $i \geq 0$.
For such value of $k$ the limit-sure problem is conceptually simpler:
once some probability is injected in $R = \Pre^k(T)$, it can loop
through the sequence of predecessors and visit $R$ infinitely often
(every $r$ steps, where $r \leq 3^{\abs{Q}}$ is the period of the
sequence of pairs of predecessors).
It follows that if a strategy ensures with probability $1$
that the set $R$ can be reached by finite paths whose lengths are 
congruent modulo $r$, then the whole probability mass can 
indeed synchronously accumulate in $R$ in the limit.

Therefore, limit-sure eventually synchronizing in $R$ 
reduces to standard limit-sure reachability with target set $R$ and the additional 
requirement that the numbers of steps at which the target set is reached 
be congruent modulo $r$.
In the case of limit-sure eventually synchronizing with support in~$Z$, we also need to 
ensure that no mass of probability leaves the sequence $\Pre^i(Z)$. 
In a state $q \in \Pre^i(Z)$, we say that an action $a \in \Act$ is \emph{$Z$-safe}
at position $i$ if\footnote{Since $\Pre^r(Z) = Z$ and $\Pre^r(R) = R$, 
we assume a modular arithmetic for exponents of $\Pre$, that is $\Pre^x(\cdot)$ 
is defined as $\Pre^{x \!\!\mod r}(\cdot)$. For example $\Pre^{-1}(Z)$ is $\Pre^{r-1}(Z)$.}
$\post(q,a)\subseteq \Pre^{i-1}(Z)$. In states 
$q \not\in \Pre^i(Z)$ there is no $Z$-safe action at position~$i$.

To encode the above requirements, we construct an MDP $\M_Z \times[r]$ 
that allows only $Z$-safe actions to be played (and then mimics the original
MDP), and tracks the position (modulo $r$) in the sequence of predecessors,
thus simply decrementing the position on each transition since all successors
of a state $q \in \Pre^i(Z)$ on a safe action are in $\Pre^{i-1}(Z)$.

Formally, if $\M = \tuple{Q, \Act, \delta}$ then 
$\M_Z \times[r] = \tuple{Q', \Act, \delta'}$ where 
\begin{itemize}
\item $Q' = Q \times \{r-1, \dots, 1, 0\} \cup \{\sink\}$; 
intuitively, we expect that $q \in \Pre^i(Z)$ in the reachable states $\tuple{q,i}$ 
consisting of a state $q$ of $\M$ and a \emph{position}~$i$ in the predecessor sequence;

\item $\delta'$ is defined as follows (assuming an arithmetic modulo $r$ on positions) 
      for all $\tuple{q,i} \in Q'$ and $a \in \Act$:
      if $a$ is a $Z$-safe action in $q$ at position $i$, 
      then $\delta'(\tuple{q,i},a)(\tuple{q',i-1}) = \delta(q,a)(q')$, 
      otherwise $\delta'(\tuple{q,i},a)(\sink) = 1$ 
      (and $\sink$ is absorbing).
\end{itemize}

Note that the size of the MDP $\M_Z \times[r]$ is exponential in the size of $\M$
(since $r$ is at most $3^{\abs{Q}}$).

\begin{lemma} \label{lem:lssr-assr}
Let $\M$ be an MDP and $R \subseteq Z$ be two sets of states such that   
$\Pre^{r}(R)=R$ and $\Pre^{r}(Z)=Z$ where~$r>0$. 
Then a state $q_0$ is limit-sure eventually synchronizing in $R$ with support in $Z$ 
($q_0 \in \winlim{event}(\fsum_R, Z)$) if and only if  
there exists $0 \leq t < r$ such that $\tuple{q_0,t}$ 
is limit-sure winning for the reachability objective 
$\Diamond (R \times \{0\})$ in the MDP $\M_Z \times [r]$.
\end{lemma}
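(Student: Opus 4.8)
The plan is to prove both implications by exploiting the fact that, in $\M_Z \times [r]$, the position component is a deterministic function of the number of elapsed steps: starting at $\tuple{q_0,t}$, every reachable state after $m$ steps has the form $\tuple{q,(t-m) \bmod r}$ (or is $\sink$), and playing a non-$Z$-safe action immediately sends all mass to the absorbing $\sink$. In particular, $R \times \{0\}$ can be entered only at steps $m$ with $m \equiv t \pmod r$, and avoiding $\sink$ is equivalent to playing $Z$-safe actions at every step; since $\Pre^r(Z)=Z$, this keeps the state in the predecessor chain $\Pre^{(t-m)\bmod r}(Z)$, and at positions equal to~$0$ inside $Z = \Pre^0(Z)$. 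These two observations are what tie synchronization (all mass in $R$, total support in $Z$, at one common time) to plain reachability of $R \times \{0\}$.

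For the direction from synchronization to reachability, I would fix $\epsilon>0$ and take $\alpha$ and $n$ with $\M^{\alpha}_n(R)\ge 1-\epsilon$ and $\M^{\alpha}_n(Z)=1$. The key step is a backward induction showing that $\M^{\alpha}_n(Z)=1$ forces every positive-probability path to satisfy $q_m \in \Pre^{n-m}(Z)$ for all $m\le n$, and hence forces $\alpha$ to play only $Z$-safe actions along such paths: if some reachable $q_m \notin \Pre^{n-m}(Z)=\Pre(\Pre^{n-m-1}(Z))$, then every action leaks positive probability to a successor outside $\Pre^{n-m-1}(Z)$, contradicting maximality of the last offending step. Consequently $\alpha$ lifts verbatim to a $\sink$-avoiding strategy in $\M_Z \times[r]$ from $\tuple{q_0,t}$ with $t = n \bmod r$, under which the mass at step $n$ sits at position~$0$ with at least $1-\epsilon$ inside $R \times \{0\}$; this realizes reachability probability $\ge 1-\epsilon$. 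Since $t$ ranges over the finite set $\{0,\dots,r-1\}$, some single $t^\ast$ recurs along a sequence $\epsilon \to 0$, yielding limit-sure reachability of $R\times\{0\}$ from $\tuple{q_0,t^\ast}$.

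For the converse, I would fix $\epsilon>0$ and a product strategy $\beta$ with $\Pr^{\beta}(\Diamond(R\times\{0\})) \ge 1-\epsilon$ from $\tuple{q_0,t}$. Because $\sink$ is absorbing and distinct from the target, any non-$Z$-safe action only destroys reachability mass, so we may assume $\beta$ plays only $Z$-safe actions and never reaches $\sink$. Choosing a horizon $N \equiv t \pmod r$ large enough that $R\times\{0\}$ is reached within $N$ steps with probability $\ge 1-\epsilon$, I would modify $\beta$ into $\gamma$ that, upon first hitting $R\times\{0\}$, switches to the pure looping strategy that cycles through $\Pre^{r-1}(R),\dots,\Pre^1(R),\Pre^0(R)=R$ using witnessing $Z$-safe actions (these exist since $\Pre^r(R)=R$ and $\Pre^i(R)\subseteq \Pre^i(Z)$), thereby returning to $R\times\{0\}$ every $r$ steps. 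Since the position equals $(t-N)\bmod r = 0$ for all surviving mass at time $N$, and $\gamma$ never visits $\sink$, translating $\gamma$ back to $\M$ gives a strategy $\alpha$ with $\M^{\alpha}_N(R)\ge 1-\epsilon$ and $\M^{\alpha}_N(Z)=1$; letting $\epsilon\to 0$ gives $q_0 \in \winlim{event}(\fsum_R,Z)$.

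The main obstacle is the \emph{exact-support} requirement $\M^{\alpha}_N(Z)=1$: unlike plain limit-sure reachability, no probability mass may be lost. This is exactly what the backward-induction lemma (forward direction) and the $\sink$-avoidance argument together with the period-$r$ alignment (converse direction) are designed to secure; the temporal synchronization then comes for free, once one observes that the position coordinate pins every reaching time to the residue $t \bmod r$.
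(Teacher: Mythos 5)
Your proof is correct, and its skeleton matches the paper's: both directions run through the same simulation between $\M$ and $\M_Z\times[r]$, using the observation that $\M^{\alpha}_n(Z)=1$ forces $Z$-safe play along every positive-probability path (you prove this by backward induction; the paper gets it by citing Lemma~\ref{lem:sure-ss-pre}), and in the converse both proofs modify a reachability strategy so that any mass entering the predecessor chain of $R$ cycles with period $r$ and sits in $R\times\{0\}$ at one common time. There are two genuine, if modest, differences. First, in the converse the paper invokes the coincidence of limit-sure and almost-sure reachability in MDPs~\cite{AHK07} to obtain a single pure strategy $\alpha$ with $\Pr^{\alpha}(\Diamond(R\times\{0\}))=1$, and switches to $R$-safe play upon first entering \emph{any} set $\Pre^i(R)\times\{i\}$; this produces one strategy $\beta$ witnessing $(1-\epsilon)$-synchronization for all $\epsilon$ simultaneously, a byproduct the paper explicitly notes (the same $\beta$ is almost-sure eventually synchronizing in $R$). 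You instead work per~$\epsilon$ with an $\epsilon$-optimal reachability strategy and switch only upon hitting $R\times\{0\}$ itself; this is more elementary---no appeal to the limit-sure/almost-sure coincidence---but yields a family of strategies rather than a single one, which is all the lemma as stated requires. Second, in the forward direction you add the pigeonhole step selecting one residue $t^{\ast}$ that works along a sequence $\epsilon\to 0$; the paper lets $t=k\bmod r$ depend on $\epsilon$ and leaves this necessary uniformization implicit, so you are in fact more careful on that point. One detail you should make explicit when writing this up: the reduction to $\sink$-avoiding, $Z$-safe play presumes that a $Z$-safe action exists at every reachable product state, which follows from the invariant $q\in\Pre^{j}(Z)$ at position $j$ provided $q_0\in\Pre^{t}(Z)$; if $q_0\notin\Pre^{t}(Z)$, the reachability value from $\tuple{q_0,t}$ is $0$, so that case never arises for a limit-sure winning $t$.
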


\begin{proof}
For the first direction of the lemma, assume
that $q_0$ is limit-sure eventually synchronizing in $R$ with support in
$Z$, and for $\epsilon >0$ let $\beta$ be a strategy such that
$\M^{\beta}_{k}(Z)=1$ and $\M^{\beta}_k(R) \geq 1-\epsilon$ for some
number $k$ of steps.  Let $0 \leq t \leq r$ such that $t=k \mod r$.
We show that from initial state $(q_0, t)$ the strategy $\alpha$ in $\M_Z \times [r]$ that
mimics (copies) the strategy $\beta$ is limit-sure winning for the
reachability objective $\Diamond R_0$: it follows from
Lemma~\ref{lem:sure-ss-pre} that $\alpha$ plays only $Z$-safe actions,
and since $Pr^{\alpha}(\Diamond R_0)\geq Pr^{\alpha}(\Diamond^k R_0) =
\M^{\beta}_k(R)\geq 1-\epsilon$, the result follows. 

For the converse direction, let $R_0 = R \times \{0\}$ and assuming
that there exists $0 \leq t < r$ such that $\tuple{q_0,t}$ is
limit-sure winning for the reachability objective $\Diamond R_0$ in
$\M_Z\times[r]$, show that $q_0$ is limit-sure synchronizing in
target set $R$ with exact support in $Z$.  
Since the winning region of limit-sure and almost-sure reachability coincide 
for MDPs~\cite{AHK07}, there exists a (pure) strategy $\alpha$ in 
$\M_Z \times [r]$ with initial state $\tuple{q,t}$ such that
$\Pr^{\alpha}(\Diamond R_0) = 1$.

Given $\epsilon>0$, we construct from $\alpha$ a pure strategy $\beta$ in $\M$ 
that is $(1-\epsilon)$-synchronizing in $R$ with support in~$Z$. 
Given a finite path $\rho = q_0 a_0 q_1 a_1 \dots q_n$ in $\M$ (with $q_0 = q$), there 
is a corresponding path $\rho' = \tuple{q_0,k_0} a_0 \tuple{q_1,k_1} a_1 \dots \tuple{q_n, k_n}$ in 
$\M_Z \times [r]$ where $k_0 = t$ and $k_{i+1} = k_i - 1$ for all $i \geq 0$.
Since the sequence $k_0, k_1, \dots$ is uniquely determined from $\rho$, 
there is a clear bijection between the paths in $\M$ and the paths in $\M_Z \times [r]$
that we often omit to apply and mention.
Define the strategy $\beta$ as follows:
if $q_n \in \Pre^{k_n}(R)$, then there exists an action $a$ such that 
$\post(q_n,a) \subseteq \Pre^{k_n-1}(R)$ and we define $\beta(\rho) = a$, 
otherwise let $\beta(\rho) = \alpha(\rho')$.
%
Thus $\beta$ mimics $\alpha$ (thus playing only $Z$-safe actions) unless a 
state $q$ is reached at step $n$ such that $q \in \Pre^{t-n}(R)$, 
and then $\beta$ switches to always playing actions
that are $R$-safe (and thus also $Z$-safe since $R \subseteq Z$).
%
%
We now prove that $\beta$ is limit-sure eventually synchronizing in
target set $R$ with support in $Z$. First since $\beta$ plays only
$Z$-safe actions, it follows for all $k$ such that $t-k = 0$ (modulo $r$), all
states reached from $q_0$ with positive probability after $k$ steps are in $Z$. 
Hence $\M^{\beta}_{k}(Z)=1$ for all such $k$. 
%
%
Second, we show that given $\epsilon>0$ there exists $k$ such that
$t-k = 0$ and $\M^{\beta}_{k}(R) \geq 1-\epsilon$, thus also
$\M^{\beta}_{k}(Z)=1$ and $\beta$ is limit-sure eventually
synchronizing in target set $R$ with support in~$Z$.  To show this,
recall that $\Pr^{\alpha}(\Diamond R_0) = 1$, and therefore
$\Pr^{\alpha}(\Diamond^{\leq k} R_0) \geq 1-\epsilon$ for all
sufficiently large $k$. Without loss of generality, consider such a
$k$ satisfying $t-k = 0$ (modulo $r$). For $i = 1, \dots, r-1$, let
$R_i = \Pre^i(R) \times \{i\}$.  Then trivially
$\Pr^{\alpha}(\Diamond^{\leq k} \bigcup_{i=0}^{r} R_i) \geq
1-\epsilon$ and since $\beta$ agrees with $\alpha$ on all finite paths
that do not (yet) visit $\bigcup_{i=0}^{r} R_i$, given a path $\rho$
that visits $\bigcup_{i=0}^{r} R_i$ (for the first time), only
$R$-safe actions will be played by $\beta$ and thus all continuations
of $\rho$ in the outcome of $\beta$ will visit $R$ after $k$ steps (in
total).  It follows that $\Pr^{\beta}(\Diamond^{=k} R_0) \geq
1-\epsilon$, that is $\M^{\beta}_k(R) \geq 1-\epsilon$. Note that we
used the same strategy $\beta$ for all $\epsilon >0$ and thus $\beta$
is also almost-sure eventually synchronizing in $R$.
 \qed
\end{proof}


Since deciding limit-sure reachability is PTIME-complete, it follows from
Lemma~\ref{lem:lssr-assr} that limit-sure synchronization (with exact support)
can be decided in EXPTIME. We show that the problem can be solved in PSPACE
by exploiting the special structure of the exponential MDP in Lemma~\ref{lem:lssr-assr}.
We conclude this section by showing that limit-sure synchronization with exact 
support is PSPACE-complete (even in the special case of a trivial support).

\begin{lemma}\label{lem:pspace-mi}
The membership problem for limit-sure eventually synchronization with exact support
is in PSPACE.
\end{lemma}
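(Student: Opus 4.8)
The plan is to show membership in PSPACE for limit-sure eventually synchronizing with exact support. By Lemma~\ref{lem:lse-pre} and the sure-winning solution of Section~\ref{sec:sure-event-sync}, the problem reduces to deciding limit-sure eventually synchronizing in $R=\Pre^k(T)$ with support in $Z=\Pre^k(U)$, where $k$ is chosen so that both $R$ and $Z$ lie in the periodic part of the sequence $(\Pre^i(T),\Pre^i(U))$; such $k$ and the period $r$ are at most $3^{\abs{Q}}$. By Lemma~\ref{lem:lssr-assr}, this in turn reduces to deciding limit-sure reachability of the target $R\times\{0\}$ from some initial state $\tuple{q_0,t}$ in the exponential-size product MDP $\M_Z\times[r]$. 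Since limit-sure reachability coincides with almost-sure reachability and is PTIME-computable, a naive approach runs in EXPTIME because $\M_Z\times[r]$ is exponential. The whole point of this lemma is to avoid explicitly building that MDP.

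First I would recall that almost-sure reachability in an MDP is characterized by an attractor-style fixed-point computation: a state is almost-sure winning if, under some strategy, with probability~$1$ the target $R\times\{0\}$ is eventually reached. The plan is to express this winning region directly over the succinct structure. A state $\tuple{q,i}$ of $\M_Z\times[r]$ is reachable only with the invariant $q\in\Pre^i(Z)$, and the only available moves are $Z$-safe actions, which deterministically decrement the position modulo $r$. Thus a path of length $r$ starting at position $i$ returns to position $i$, having visited position $0$ exactly once. The key structural observation is that the almost-sure reachability region in $\M_Z\times[r]$ can be described as the set of pairs $\tuple{q,i}$ from which a strategy playing only $Z$-safe actions guarantees reaching $R$ at a position-$0$ step with probability~$1$; because the position coordinate is deterministic, this is governed entirely by the reachability structure within the single MDP $\M$ together with the congruence-mod-$r$ constraint.

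Concretely, I would compute the almost-sure winning region of $\M_Z\times[r]$ by the standard alternating fixed-point (remove states with a positive-probability escape to the losing sink, iterate), but perform each membership query $\tuple{q,i}\in W$ \emph{on the fly} using polynomial space rather than storing the exponential state set. The position coordinate $i$ ranges over $\{0,\dots,r-1\}$ with $r\le 3^{\abs{Q}}$, so $i$ is writable in polynomial space; the sets $\Pre^i(Z)$ and $\Pre^i(R)$ needed to test $Z$-safety and to identify the target are themselves computable in PSPACE by iterating the $\Pre$ operator (as in the emptiness algorithm for 1L-AFA). The almost-sure reachability fixed point has at most polynomially many alternations in the iteration depth when expressed appropriately over the periodic structure, and each step tests only local transitions of $\M$ combined with position arithmetic. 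Since PSPACE is closed under such polynomially-deep nested fixed-point queries with PSPACE-computable atomic predicates, the overall decision procedure stays in PSPACE.

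The main obstacle I expect is the succinct computation of the almost-sure reachability region without materializing $\M_Z\times[r]$: the almost-sure attractor computation is inherently a least-fixed-point over a greatest-fixed-point (``can reach $R$ while avoiding the set of states that can escape''), and naively it iterates up to the number of states, which is exponential here. The resolution I would pursue is to exploit the deterministic, periodic position coordinate so that the fixed point over $\M_Z\times[r]$ projects to a bounded number of reachability/safety computations parameterized by the residue $i \bmod r$ within $\M$ itself, each of which is a polynomial-space predicate. Verifying that this projection is sound --- that is, that the exponential blow-up in $r$ enters only as a polynomially-writable counter and never forces storing exponentially many distinct winning states simultaneously --- is the delicate part, and I would argue it by showing that the winning region is closed under the modular structure and hence determined by its slices, each computable and tested in PSPACE.
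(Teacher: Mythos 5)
You set up the reduction exactly as the paper does (Lemma~\ref{lem:lse-pre} plus the sure-winning case, then Lemma~\ref{lem:lssr-assr} to reduce to limit-sure, equivalently almost-sure, reachability of $R \times \{0\}$ in $\M_Z \times [r]$), and you correctly identify the real difficulty: this product MDP has $\abs{Q} \cdot r$ states with $r \leq 3^{\abs{Q}}$, so the standard almost-sure-reachability computation cannot be run explicitly. But at precisely this point your argument stops being a proof. The fixed-point computation you propose is not ``polynomially deep'': the attractor-style iteration (a least fixed point nested inside a greatest fixed point) converges only after a number of iterations bounded by the number of states of the MDP it runs on, which here is $\abs{Q} \cdot r$, i.e.\ exponential. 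Your proposed resolution --- that the computation ``projects to a bounded number of reachability/safety computations parameterized by the residue $i \bmod r$'' --- is exactly the claim that needs proof, and you explicitly defer it (``the delicate part''). Saying the winning region is ``determined by its slices, each computable and tested in PSPACE'' is circular: any subset of $Q \times \{0,\dots,r-1\}$ is determined by its slices; the whole content of the lemma is that slice membership is PSPACE-decidable, and the slices are not independent --- the closure condition chains slice $i$ to slice $i+1$ around a cycle of exponential length, so they cannot be computed one residue at a time.

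What is missing is the paper's key idea: replace the fixed-point computation by a nondeterministic guess-and-check of a \emph{witness}, and conclude by NPSPACE $=$ PSPACE. Since limit-sure and almost-sure reachability coincide and pure memoryless strategies suffice, a set $W$ is almost-sure winning iff there is a memoryless strategy under which $(1)$ $W$ is closed and $(2)$ every state of $W$ has a path of length at most $\abs{W}$ to $R \times \{0\}$ in the induced Markov chain. Because every $Z$-safe action moves position $j$ to $j-1$, all successors of the slice $W_i = W \cap (Q \times \{-i \bmod r\})$ lie in the next slice $W_{i+1}$; so a nondeterministic algorithm can guess $W_0$, then generate the slices on the fly (guessing one action per state of the current slice), keeping in memory only $W_0$, the current slice $W_i$, a set $P_i \subseteq W_i$ of path certificates witnessing condition $(2)$, and a binary counter $i \leq \abs{Q} \cdot r$; after $\abs{Q} \cdot r$ steps it checks that $W_{\abs{Q} \cdot r} \subseteq W_0$ (closure around the cycle), that $P_{\abs{Q} \cdot r} = \emptyset$ (every guessed path reached the target at a position that is $0$ modulo $r$), and that $\tuple{q_0,t}$ appears in some slice. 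This witness-based formulation is what turns the exponential object into a polynomial-space computation; without it, or something equally concrete in its place, your proof has a genuine gap.
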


\begin{proof}
We present a (nondeterministic) PSPACE algorithm to decide, 
given an MDP $\M=\tuple{Q,\Act,\delta}$, a state $q_0$, and two sets $T \subseteq U$, 
whether $q_0$ is limit-sure eventually synchronizing in~$T$ with support in~$U$. 

First, the algorithm computes numbers $k\geq 0$ and $r>0$ such that
for $R = \Pre^k(T)$ and $Z = \Pre^k(U)$ we have $\Pre^{r}(R)=R$ and
$\Pre^{r}(Z)=Z$. As discussed before, this can be done by guessing
$k,r \leq 3^{\abs{Q}}$.  By Lemma~\ref{lem:lse-pre}, we have
$\winlim{event}(\fsum_T,U) = \winlim{event}(\fsum_R,Z) \cup
\winsure{event}(\fsum_T)$, and since sure eventually synchronizing in
$T$ can be decided in PSPACE (by
Theorem~\ref{theo:sure-eventually-pspace-c}), it suffices to decide
limit-sure eventually synchronizing in $R$ with support in $Z$ in
PSPACE.  According to Lemma~\ref{lem:lssr-assr}, it is therefore
sufficient to show that deciding limit-sure winning for the (standard)
reachability objective $\Diamond (R \times \{0\})$ in the MDP $\M_Z
\times [r]$ can be done in polynomial space.  As we cannot afford to
construct the exponential-size MDP $\M_Z \times [r]$, the algorithm
relies on the following characterization of the limit-sure winning set
for reachability objectives in MDPs. It is known that the winning
region for limit-sure and almost-sure reachability
coincide~\cite{AHK07}, and pure memoryless strategies are sufficient.
Therefore, we can see that the almost-sure winning set $W$ for the
reachability objective $\Diamond (R \times \{0\})$ satisfies the
following property: 
there exists a
memoryless strategy $\alpha: W \to \Act$ such that $(1)$ $W$ is
closed, that is $\post(q,\alpha(q)) \subseteq W$ for all $q \in W$,
and $(2)$ in the graph of the Markov chain $M(\alpha)$, for every
state $q \in W$, there is a path (of length at most $\abs{W}$) from
$q$ to $R \times \{0\}$.

This property ensures that from every state in $W$, the target set $R
\times \{0\}$ is reached within $\abs{W}$ steps with positive (and
bounded) probability, and since $W$ is closed it ensures that $R
\times \{0\}$ is reached with probability~$1$ in the long run.  Thus
any set $W$ satisfying the above property is almost-sure winning.

Our algorithm will guess and explore on the fly a set $W$ to ensure
that it satisfies this property, and contains the state
$\tuple{q_0,t}$ for some $t < r$.  As we cannot afford to explicitly
guess $W$ (remember that $W$ could be of exponential size), we
decompose $W$ into \emph{slices} $W_0, W_1,\dots$ such that $W_i
\subseteq Q$ and $W_i \times \{-i \mod r\} = W \cap (Q \times \{-i
\mod r\})$.  We start by guessing $W_0$, and we use the property 
that in $\M_Z \times [r]$, from a state
$(q,j)$ under all $Z$-safe actions, all successors are of the form
$(\cdot, j-1)$.  It follows that the successors of the states in~$W_i
\times \{-i\}$ should lie in the slice $W_{i+1} \times \{-i-1\}$, and
we can guess on the fly the next slice $W_{i+1} \subseteq Q$ by
guessing for each state $q$ in a slice $W_i$ an action $a_q$ such that
$\bigcup_{q \in W_i} \post(q,a_q) \subseteq W_{i+1}$.  Moreover, we
need to check the existence of a path from every state in $W$ to $R
\times \{0\}$. As $W$ is closed, it is
sufficient to check that there is a path from every state in $W_0
\times \{0\}$ to $R \times \{0\}$.  To do this we guess along with the
slices $W_0, W_1, \dots$ a sequence of sets $P_0, P_1, \dots$ where
$P_i \subseteq W_i$ contains the states of slice $W_i$ that belong to
the guessed paths. Formally, $P_0 = W_0$, and for all $i\geq 0$, the
set $P_{i+1}$ is such that $\post(q,a_q) \cap P_{i+1} \neq \emptyset$
for all $q \in P'_i$ (where $P'_i = P_i \setminus R$ if $i$ is a
multiple of $r$, and $P'_i = P_i$ otherwise), that is $P_{i+1}$
contains a successor of every state in $P_i$ that is not already in
the target $R$ (at position $0$ modulo $r$).

We need polynomial space to store the first slice $W_0$, 
the current slice $W_i$ and the set $P_i$, and the value of $i$ (in binary).
As $\M_Z \times [r]$ has $\abs{Q}\cdot r$ states, the algorithm runs 
for $\abs{Q}\cdot r$ iterations and then checks that 
$(1)$ $W_{\abs{Q}\cdot r} \subseteq W_0$ to ensure that 
$W = \bigcup_{i \leq \abs{Q}\cdot r} W_i\times\{i \mod r\}$ is closed, 
$(2)$ $P_{\abs{Q}\cdot r} = \emptyset$ showing that from every state in $W_0 \times \{0\}$ 
there is a path to $R \times \{0\}$ (and thus also from all states in $W$), 
and $(3)$ the state $q_0$ occurs in some slice $W_i$. 
The correctness of the algorithm follows from the characterization of the almost-sure 
winning set for reachability in MDPs: if some state $\tuple{q_0,t}$ is limit-sure winning, then 
the algorithm accepts by guessing (slice by slice) the almost-sure winning set $W$ 
and the paths from $W_0 \times \{0\}$ to $R \times \{0\}$ (at position $0$ modulo $r$), 
and otherwise any set (and paths) correctly 
guessed by the algorithm would not contain $q_0$ in any slice.

\qed
\end{proof}


It follows from the proof of Lemma~\ref{lem:lssr-assr} that
all winning modes for eventually synchronizing
are independent of the numerical value of the positive transition probabilities.

\begin{corollary}\label{col:uniform-dist-limit}
Let $\mu\in \{sure, almost, limit\}$ and $T\subseteq U$ be two sets. 
For two  distributions $d,d' $ with $\Supp(d)=\Supp(d')$, we have 
$d \in \win{event}{\mu}(\fsum_T,U)$~if and only if~$d' \in \win{event}{\mu}(\fsum_T,U)$.
\end{corollary}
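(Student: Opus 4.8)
The plan is to prove the stronger statement that for \emph{any} strategy $\alpha$ the relevant quantities transfer between $d$ and $d'$ up to a multiplicative constant independent of the step number. Writing $\M^{\alpha}_n[\nu]$ for the distribution reached after $n$ steps under $\alpha$ from an initial distribution $\nu$ (so the paper's $\M^{\alpha}_n$ is $\M^{\alpha}_n[\mu_0]$, and a Dirac start at $q$ is written $\M^{\alpha}_n[q]$), the first step is to record the convex decomposition
\[
\M^{\alpha}_n[\nu] \;=\; \sum_{q \in \Supp(\nu)} \nu(q)\cdot \M^{\alpha}_n[q].
\]
This holds because a strategy is merely a function on histories: conditioned on the starting state being $q$, its behaviour is exactly that of the same $\alpha$ started from the Dirac distribution on $q$. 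Formally it follows by expanding $\Pr^{\alpha}$ as a sum over path-prefixes and factoring out the term $\nu(q_0)$ contributed by the first state.

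Next I would fix $d,d'$ with $S:=\Supp(d)=\Supp(d')$ and set $C := \max_{q \in S} d(q)/d'(q)$, a finite positive constant since $S$ is finite and $d,d'$ are strictly positive on $S$. From $d(q)\le C\,d'(q)$ and the decomposition I obtain, for every $\alpha$ and every $n$, two facts. First, a term-by-term comparison of the missing mass $1-\M^{\alpha}_n[\nu](T)=\sum_{q\in S}\nu(q)\bigl(1-\M^{\alpha}_n[q](T)\bigr)$ yields
\[
1-\M^{\alpha}_n[d](T) \;\le\; C\,\bigl(1-\M^{\alpha}_n[d'](T)\bigr).
\]
Second, since a convex combination of $[0,1]$-values equals $1$ exactly when every positively weighted component equals $1$, and $d,d'$ share the same positive-weight set $S$, we get $\M^{\alpha}_n[d](U)=1$ iff $\M^{\alpha}_n[q](U)=1$ for all $q\in S$ iff $\M^{\alpha}_n[d'](U)=1$.

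With these two facts each winning mode follows, the roles of $d,d'$ being symmetric. For \emph{sure} winning, a missing mass of $0$ transfers by the displayed inequality (so $\M^{\alpha}_n[d](T)=1$ iff $\M^{\alpha}_n[d'](T)=1$) and the support requirement transfers by the second fact, so the same witnessing pair $(\alpha,n)$ works for both. For \emph{almost-sure} winning the \emph{same single} strategy $\alpha$ works: the inequality gives $\inf_n\bigl(1-\M^{\alpha}_n[d](T)\bigr)\le C\cdot\inf_n\bigl(1-\M^{\alpha}_n[d'](T)\bigr)=0$, hence $\sup_n\M^{\alpha}_n[d](T)=1$, while the support condition is preserved step by step by the second fact. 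For \emph{limit-sure} winning, given $\epsilon>0$ I would take a strategy witnessing $\M^{\alpha}_{n}[d'](T)\ge 1-\epsilon/C$ together with $\M^{\alpha}_{n}[d'](U)=1$; the same strategy at the same step $n$ then yields $\M^{\alpha}_{n}[d](T)\ge 1-\epsilon$ and $\M^{\alpha}_{n}[d](U)=1$, so $d\in\winlim{event}(\fsum_T,U)$.

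The only delicate point is the convex decomposition and its bookkeeping: one must check that $\M^{\alpha}_n[q]$ uses the very same strategy function $\alpha$ restricted to histories beginning in $q$, so that no hidden dependence on the initial distribution enters. Everything else is a direct comparison, and because $C$ depends only on $d,d'$ (not on $n$ or $\epsilon$), all three thresholds ($=1$, $\sup=1$, and $1-\epsilon$) carry over. This argument also makes transparent the remark preceding the corollary, since at no point are the numerical values of the positive transition probabilities used beyond their positivity.
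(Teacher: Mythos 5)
Your proof is correct, and it takes a genuinely different route from the paper's. The paper obtains the corollary as a by-product of its algorithmic characterization: the proofs of Lemma~\ref{lem:sure-ss-pre}, Lemma~\ref{lem:lse-pre} and Lemma~\ref{lem:lssr-assr} describe the winning regions purely in terms of support structure --- iterated predecessor sets $\Pre^i(T)$, $\Pre^i(U)$ and limit-sure (equivalently almost-sure) reachability in the product MDP $\M_Z\times[r]$ --- none of which depends on the numerical values of the positive probabilities; since an initial distribution can be folded into the transition relation by adding a fresh initial state, two initial distributions with equal support become interchangeable. Your argument is instead elementary and self-contained: the convex decomposition $\M^{\alpha}_n[\nu]=\sum_{q\in\Supp(\nu)}\nu(q)\cdot\M^{\alpha}_n[q]$ (which is valid precisely because a strategy is a function of the history alone, the point you rightly flag) plus the constant $C=\max_{q}d(q)/d'(q)$ yields the quantitative transfer $1-\M^{\alpha}_n[d](T)\le C\bigl(1-\M^{\alpha}_n[d'](T)\bigr)$ and exact preservation of the condition $\M^{\alpha}_n[\cdot](U)=1$, for every strategy and every step. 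What your approach buys: it uses none of the machinery of Section~\ref{sec:limit-sure}, it shows that the \emph{same} strategy witnesses winning from $d$ and from $d'$ (with explicit degradation of $\epsilon$ by the factor $C$), it treats the three modes uniformly, and it removes any appearance of forward dependence, since the corollary is already invoked in Lemma~\ref{lem: almost-limit-reduce-limit-event} before Lemma~\ref{lem:lssr-assr} is established. What the paper's route buys: it costs nothing once the characterization is in place, and it delivers a formally stronger invariance --- insensitivity to the numerical values of all positive transition probabilities, not only of the initial distribution --- which is what the sentence introducing the corollary actually asserts.
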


To establish the PSPACE-hardness for limit-sure eventually synchronizing in MDPs, 
we use a reduction from the universal finiteness problem for 1L-AFAs.

\begin{lemma}\label{lem:limit-event-pspace-hard}
The membership problem for $\winlim{event}(\fsum_T)$ is PSPACE-hard
even if $T$ is a singleton.
\end{lemma}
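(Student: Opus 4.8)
The plan is to reduce from the universal finiteness problem for 1L-AFA, which is PSPACE-hard by Lemma~\ref{lem:universal-finiteness-pspace-hard}. Recall that $\A = \tuple{Q,\delta_{\A},\F}$ fails universal finiteness exactly when $\Pre_{\A}^n(\F) \neq \emptyset$ for infinitely many $n$, i.e.\ when the ultimately periodic sequence $\Pre_{\A}^i(\F)$ has a nonempty periodic part, equivalently when some state has an infinite language. Since PSPACE is closed under complement, it suffices to build from $\A$ an MDP $\M$ with a singleton target $\{q_f\}$ and an initial state $q_0$ such that $q_0 \in \winlim{event}(q_f)$ if and only if $\A$ does \emph{not} have universal finiteness.

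First I would use the correspondence from the paragraph on the relation between 1L-AFA and MDPs to obtain a core MDP whose operator $\Pre_{\M}$ mimics $Acc_{\A}(\cdot,\F) = \Pre_{\A}^{\cdot}(\F)$, so that the periodic part $R$ of the predecessor sequence of the target coincides with the set of states occurring infinitely often in $\Pre_{\A}^i(\F)$; this set is nonempty precisely when universal finiteness fails. To this core I would attach, as in the proof of Lemma~\ref{lem:universal-finiteness-pspace-hard}, a counting gadget of disjoint prime cycles so that the simulation runs for more than $2^{\abs{Q}}$ steps and the global predecessor sequence becomes genuinely periodic with a known period $r$. Finally I would add a fresh singleton target $q_f$ and a fresh initial state $q_0$ whose transitions \emph{recirculate} the probability mass, repeatedly offering the opportunity to re-enter $R$ at positions aligned modulo $r$, while ensuring that $q_0$ is never \emph{sure} eventually synchronizing in $q_f$.

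For the forward direction, assuming universal finiteness fails, the periodic set $R$ is nonempty and, thanks to the recirculation gadget, lies in a closed region from which $R$ is reachable with positive probability at aligned times. The almost-sure reachability argument already invoked in Lemma~\ref{lem:lssr-assr} (the limit-sure and almost-sure winning regions for reachability coincide) then lets arbitrarily much of the mass accumulate in $R$ with the correct alignment modulo $r$, and Lemma~\ref{lem:lse-pre} injects it into $q_f$ through the aperiodic prefix, witnessing $q_0 \in \winlim{event}(q_f)$. For the converse, assuming universal finiteness, the sequence $\Pre_{\M}^n(\{q_f\})$ is eventually empty, so $Q \setminus \Pre^n(\{q_f\})$ is eventually everything; by the quantitative estimate in the proof of Lemma~\ref{lem:lse-pre} (mass sitting outside $\Pre^k(\cdot)$ at $k$ steps before the synchronizing instant survives outside the target with probability at least $\eta^k$), no strategy can bring the mass in $q_f$ close to $1$ at the remaining large aligned times, and since $q_0$ was arranged not to be sure winning, the finitely many small times do not help either.

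The main obstacle is the forward direction: guaranteeing that the \emph{entire} probability mass, not merely a fixed fraction, can be funneled into the possibly small nonempty periodic set $R$ with probability tending to $1$ and at a single aligned step. This is exactly what forces both the recirculation gadget (to turn a bounded per-attempt success probability into almost-sure reachability within a closed region) and the modular bookkeeping over the exponential horizon produced by the prime-cycle construction. Combining the combinatorics of that gadget with the almost-sure reachability characterization, while keeping the alignment modulo $r$ consistent, is the delicate part of the argument; the two containment directions then follow routinely from Lemmas~\ref{lem:sure-ss-pre}, \ref{lem:lse-pre}, and~\ref{lem:lssr-assr}.
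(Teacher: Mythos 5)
Your high-level plan coincides with the paper's proof: reduce from the complement of the universal finiteness problem (legitimate since PSPACE is closed under complement), add a fresh initial state whose transitions recirculate the probability mass so that the nonempty periodic part of the predecessor sequence can be re-entered at aligned positions, and conclude via Lemmas~\ref{lem:lse-pre} and~\ref{lem:lssr-assr}. However, one ingredient you add would break the reduction: the prime-cycle gadget. In the proof of Lemma~\ref{lem:universal-finiteness-pspace-hard} that gadget is a \emph{countdown timer}, not a period-enforcer: its entry state $x$ satisfies $x \in \Pre_B^{i}(\F')$ iff $0 \leq i < p^{\#}_n$, and since every state of the core copy has its transition conjoined with $x$, the whole predecessor sequence becomes empty after $p^{\#}_n$ steps unless the one distinguished state with a self-loop has been reached. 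In your setting there is no distinguished state (universal finiteness quantifies over all states, and you give none a self-loop), so attaching the gadget ``as in Lemma~\ref{lem:universal-finiteness-pspace-hard}'' forces $\Pre^n(\cdot)$ to be eventually empty for \emph{every} input automaton, which destroys your forward direction (an automaton that fails universal finiteness but whose predecessor sequence never covers the whole state space would yield a non-winning MDP). The gadget is also unnecessary: the predecessor sequence of any finite MDP is ultimately periodic with $k, r \leq 2^{\abs{Q}}$ for free, and Lemmas~\ref{lem:lse-pre} and~\ref{lem:lssr-assr} only need the \emph{existence} of such $k$ and $r$, never their explicit values.

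The second gap is that the decisive part of the converse is missing. You never specify how the fresh target $q_f$ is wired to the rest of the MDP, yet everything hinges on that wiring (for instance, if $q_f$ had a self-loop, then $\Pre^n(\{q_f\})$ would never be empty and your converse argument could not even start). The paper avoids a fresh target altogether: universal finiteness is PSPACE-hard already for a singleton accepting set, so the target is the automaton's own accepting state, and the only new state is the recirculating $q_{\init}$ ($\sharp$ gives the uniform distribution over $Q$ from $q_{\init}$, the Dirac on $q_{\init}$ from every copy state, and $q_{\init}$ has self-loops on $\Act$). The converse then consists precisely in proving that this added state cannot by itself sustain the predecessor sequence or provide a win: the paper argues via the dichotomy of Lemma~\ref{lem:lse-pre} that limit-sure winning from $q_{\init}$ forces either $\Pre^n_{\N}(T) \neq \emptyset$ for all $n$ directly, or sure winning, which by Lemma~\ref{lem:sure-ss-pre} and $\post(q_{\init},\sharp)=Q$ forces $Q \subseteq \Pre^k_{\N}(T)$ for some $k$, whence monotonicity of $\Pre$ keeps the sequence nonempty forever; finally $\Pre^n_{\M}(T) = \Pre^n_{\N}(T) \cap Q$ transfers the conclusion back to the input automaton. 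Your sketch simply asserts that the sequence is eventually empty and that $q_0$ is not sure winning; for your (unspecified) construction those assertions are exactly what has to be proved. A last, fixable point: your quantitative estimate must be applied with the fixed offset $N$ at which $\Pre^N(\{q_f\}) = \emptyset$, giving the uniform bound $\epsilon \geq \eta^N$ for every time $n \geq N$; letting the offset grow with $n$ only yields $\epsilon \geq \eta^n$, which tends to $0$ and refutes nothing.
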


\begin{figure}[t]
\begin{center}
\begin{picture}(115,46)(0,2)

\node[Nmarks=n, Nw=40, Nh=22, dash={0.2 0.5}0](m1)(20,20){}
\node[Nframe=n](label)(8,13){MDP $\M$}
\drawpolygon[dash={0.8 0.5}0](38,30)(23,30)(23,10)(38,10)
\node[Nframe=n](label)(30,13){$T\subseteq Q$}
\node[Nmarks=n](n1)(30,20){$q_2$}

\node[Nframe=n](label)(19,20){$\dots$}
\node[Nmarks=n](n2)(10,20){$q_1$}
\node[Nframe=n](arrow)(45,20){{\Large $\Rightarrow$}}

\node[Nmarks=n, Nw=40, Nh=22, dash={0.2 0.5}0](nm1)(80,20){}
\node[Nmarks=n, Nw=50, Nh=48, dash={0.4 1}0](m2)(80,32){}
\node[Nframe=n](label)(65,52){MDP $\N$}

\node[Nframe=n](label)(68,13){MDP $\M$}
\drawpolygon[dash={0.8 0.5}0](98,30)(83,30)(83,10)(98,10)
\node[Nframe=n](label)(90,13){$T\subseteq Q$}
\node[Nmarks=n](nn1)(90,20){$q_2$}
\node[Nframe=n](label)(79,20){$\dots$}
\node[Nmarks=n](nn2)(70,20){$q_1$}

\node[Nmarks=n](qq)(80,43){$q_{\init}$} 

\drawloop[ELside=l,loopCW=y, loopangle=90, loopdiam=4](qq){$\Act$}

\drawedge[ELpos=50, ELside=l](qq,nn1){$\sharp$}
\node[Nframe=n](label)(80.2,36){$\cdots$}
\drawedge[ELpos=50, ELside=r](qq,nn2){$\sharp$}

\drawedge[ELpos=50, ELside=r, curvedepth=-8](nn1,qq){$\sharp$}

\drawedge[ELpos=50, ELside=l, curvedepth=+8](nn2,qq){$\sharp$}

\end{picture}
\caption{Sketch of reduction to show PSPACE-hardness of  the membership problem
for limit-sure eventually synchronizing.}\label{fig:lim-sure-reduction}
\end{center}
\end{figure}
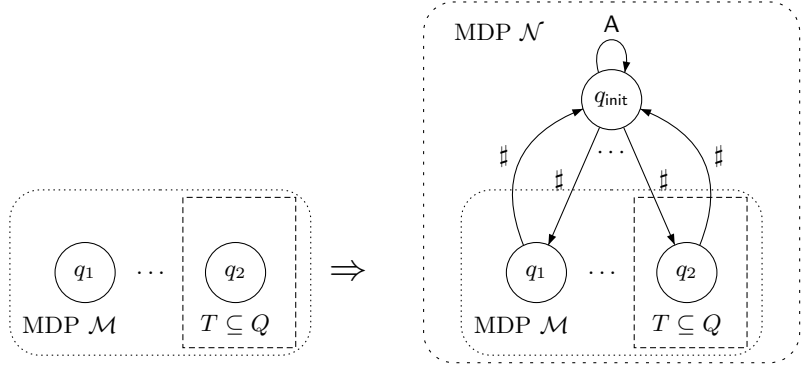

\begin{proof}
The proof is by a reduction from the universal finiteness problem for one-letter
alternating automata (1L-AFA), which is PSPACE-complete (by Lemma~\ref{lem:universal-finiteness-pspace-hard}).
It is easy to see that this problem remains PSPACE-complete even if the set $T$ of accepting
states of the 1L-AFA is a singleton, and 
given the tight relation between 1L-AFA and MDP (see Section~\ref{sec:1L-AFA}),
it follows from the definition of the universal finiteness problem that 
deciding, in an MDP $\M$, whether the sequence $\Pre^n_{\M}(T) \neq \emptyset$ 
for all $n \geq 0$ is PSPACE-complete.

The reduction is as follows (see also \figurename~\ref{fig:lim-sure-reduction}). 
Given an MDP $\M=\tuple{Q, \Act,\delta}$ and a singleton $T \subseteq Q$,
we construct an MDP $\N =\tuple{Q', \Act',\delta'}$ with state space $Q' = Q \uplus \{q_{\init}\}$ such that 
$\Pre^n_{\M}(T) \neq \emptyset$ for all $n \geq 0$ if and only if 
$q_{\init}$ is limit-sure eventually synchronizing in $T$.
The MDP $\N$ is essentially a copy of $\M$ with alphabet $\Act \uplus \{\sharp\}$ and 
the transition function on action $\sharp$ is the uniform 
distribution on $Q$ from $q_{\init}$, and the Dirac distribution on 
$q_{\init}$ from the other states $q \in Q$. There are self-loops on $q_{\init}$ 
for all other actions $a \in \Act$. Formally, the
transition function $\delta'$ is defined as follows, for all $q \in Q$:

\begin{itemize} 
\item $\delta'(q,a) = \delta(q,a)$ for all $a \in \Act$ (copy of $\M$),
      and $\delta'(q,\sharp)(q_{\init}) = 1$;
\item $\delta'(q_{\init},a)(q_{\init}) = 1$ for all $a \in \Act$,
      and $\delta'(q_{\init}, \sharp)(q) = \frac{1}{\abs{Q}}$.
\end{itemize}

We establish the correctness of the reduction as follows.
For the first direction, assume that 
$\Pre^n_{\M}(T) \neq \emptyset$ for all~$n \geq 0$. 
Then since $\N$ embeds a copy of $\M$ it follows that 
$\Pre^n_{\N}(T) \neq \emptyset$ for all~$n \geq 0$ and
there exist numbers $k_0,r \leq 2^{\abs{Q}}$ 
such that $\Pre^{k_0 + r}_{\N}(T) = \Pre^{k_0}_{\N}(T) \neq \emptyset$.
Using Lemma~\ref{lem:lse-pre} with $k = k_0$ and $R = \Pre^{k_0}_{\N}(T)$ 
(and $U = Z = Q'$ is the trivial support), it is sufficient to
prove that $q_{\init} \in \winlim{event}(R)$ to get $q_{\init} \in \winlim{event}(T)$
(in $\N$).
We show the stronger statement that $q_{\init}$ is actually almost-sure 
eventually synchronizing in $R$ with the pure strategy $\alpha$ defined as follows,
for all play prefix $\rho$ (let $m = \abs{\rho} \!\!\mod r$):

\begin{itemize} 
\item if $\Last(\rho) = q_{\init}$, then $\alpha(\rho) = \sharp$;
\item if $\Last(\rho) = q \in Q$, then 
	\begin{itemize} 
	\item if $q \in \Pre^{r-m}_{\N}(R)$, then $\alpha(\rho)$ plays a $R$-safe action at position $r-m$;
	\item otherwise, $\alpha(\rho) = \sharp$.
	\end{itemize}
\end{itemize}

The strategy $\alpha$ ensures that the probability mass that is not (yet) in 
the sequence of predecessors $\Pre^n_{\N}(R)$ goes to $q_{\init}$, where by playing $\sharp$
at least a fraction $\frac{1}{\abs{Q}}$ of it would reach the sequence of predecessors
(at a synchronized position). It follows that after $2i$ steps, the probability
mass in $q_{\init}$ is $(1 - \frac{1}{\abs{Q}})^i$ and the probability mass
in the sequence of predecessors is $1 - (1 - \frac{1}{\abs{Q}})^i$. For $i \to \infty$,
the probability in the sequence of predecessors tends to $1$ and since 
$\Pre^n_{\N}(R) = R$ for all positions $n$ that are a multiple of $r$,
we get $\sup_{n}\M^{\alpha}_{n}(R) = 1$ and $q_{\init} \in \winas{event}(R)$.

For the converse direction, assume that $q_{\init} \in \winlim{event}(T)$
is limit-sure eventually synchronizing in $T$. By Lemma~\ref{lem:lse-pre},
either $(1)$ $q_{\init}$ is limit-sure eventually synchronizing in $\Pre^n_{\N}(T)$
for all $n \geq 0$, and then it follows that $\Pre^n_{\N}(T) \neq \emptyset$ 
for all $n \geq 0$, or $(2)$ $q_{\init}$ is sure eventually synchronizing in $T$,
and then since only the action $\sharp$ leaves the state $q_{\init}$ (and $\post(q_{\init},\sharp) = Q$), 
the characterization of Lemma~\ref{lem:sure-ss-pre} shows that 
$Q \subseteq \Pre^k_{\N}(T)$ for some $k \geq 0$, and since 
$Q \subseteq \Pre_{\N}(Q)$ and $\Pre_{\N}(\cdot)$ is a monotone operator, 
it follows that $Q \subseteq \Pre^n_{\N}(T)$ for all $n \geq k$ and thus 
$\Pre^n_{\N}(T) \neq \emptyset$ for all $n \geq 0$.
We conclude the proof by noting that $\Pre^n_{\M}(T) = \Pre^n_{\N}(T) \cap Q$
and therefore $\Pre^n_{\M}(T) \neq \emptyset$ for all $n \geq 0$.

\qed
\end{proof}

The example in the proof of Lemma~\ref{lem:inf-mmeory-almost-event} can be used
to show that the memory needed by a family of strategies to win 
limit-sure eventually synchronizing objective (in target $T= \{q_2\}$) 
is unbounded.

The following theorem summarizes the results for limit-sure eventually synchronizing. 

\begin{theorem}\label{theo:limit-sure-eventually}
For limit-sure eventually synchronizing (with or without exact support) in MDPs:

\begin{enumerate}
\item (Complexity). The membership problem is PSPACE-complete.

\item (Memory). Unbounded memory is required for both pure 
and randomized strategies, and pure strategies are sufficient.
\end{enumerate}

\end{theorem}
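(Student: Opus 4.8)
The plan is to assemble this theorem directly from the lemmas already established, since each of its two parts reduces to an earlier result. For the complexity, I would first observe that plain limit-sure eventually synchronizing in $T$ (i.e.\ $\winlim{event}(\fsum_T)$) is the special case of limit-sure eventually synchronizing with exact support in which the support set is the whole state space $Q$, the \emph{trivial} support for which $\M^{\alpha}_n(Q)=1$ holds vacuously at every step. Hence the PSPACE upper bound is immediate from Lemma~\ref{lem:pspace-mi}, whose nondeterministic PSPACE algorithm already handles arbitrary support $U \supseteq T$. The matching lower bound is exactly Lemma~\ref{lem:limit-event-pspace-hard}, which gives PSPACE-hardness even when $T$ is a singleton; together these yield PSPACE-completeness of the membership problem.

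For the memory statement I would treat the two directions separately. For \emph{pure strategies are sufficient}, I would trace through the construction in the proof of Lemma~\ref{lem:lssr-assr}: the $(1-\epsilon)$-synchronizing strategy $\beta$ built there is defined deterministically, mimicking a pure memoryless limit-sure reachability strategy $\alpha$ in $\M_Z \times [r]$ (such a pure strategy exists because the limit-sure and almost-sure winning regions for reachability in MDPs coincide and admit pure memoryless witnesses~\cite{AHK07}) and then switching to a fixed $R$-safe action at the appropriate position. Since $R$-safe and $Z$-safe actions are pure (Dirac) choices and $\beta$ never randomizes, the witnessing strategy for every $\epsilon$ is pure; randomization therefore buys nothing in the limit-sure mode.

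For \emph{unbounded memory is required}, I would invoke the MDP of \figurename~\ref{fig:inf-mem} from the proof of Lemma~\ref{lem:inf-mmeory-almost-event}, with target $T=\{q_2\}$. The key quantitative point is that the only way to accumulate probability at least $1-\epsilon$ in $q_2$ is to play $a$ for about $\log_2(1/\epsilon)$ steps and then play $b$; the number of initial $a$-moves, and hence the size of any finite-state transducer realizing such a strategy, must grow without bound as $\epsilon \to 0$. Thus although each member of the $(1-\epsilon)$-synchronizing family can be taken finite-memory (and pure), no single finite memory bound works uniformly over the family, which is precisely the meaning of \emph{unbounded} memory; the argument counts execution steps and so applies equally to pure and randomized strategies. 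Assembling these observations gives both claims of the theorem.

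The statement is essentially a bookkeeping summary, so there is no deep obstacle; the subtlest point is to phrase the memory claim correctly at the level of a \emph{family} of strategies, carefully distinguishing \emph{unbounded} (finite-memory witnesses whose sizes grow without bound) from the \emph{infinite} memory of a single strategy needed in the almost-sure mode (Theorem~\ref{theo:almost-sure-eventually}), and to keep the pure-sufficiency claim consistent with this, namely that each strategy in the witnessing family is simultaneously pure and finite-memory even though their sizes are unbounded.
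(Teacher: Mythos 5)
Your proposal is correct and matches the paper's own treatment: the theorem is assembled exactly as you describe, with the PSPACE upper bound coming from Lemma~\ref{lem:pspace-mi} (plain limit-sure synchronization being the special case of trivial support $U=Q$), the lower bound from Lemma~\ref{lem:limit-event-pspace-hard}, pure sufficiency read off from the pure strategy $\beta$ constructed in Lemma~\ref{lem:lssr-assr}, and unbounded memory witnessed by the MDP of \figurename~\ref{fig:inf-mem} with target $\{q_2\}$. Your care in distinguishing \emph{unbounded} memory (a family of pure finite-memory strategies of growing size) from the \emph{infinite} memory of a single almost-sure winning strategy is precisely the distinction the paper intends.
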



\paragraph{{\bf Acknowledgment}}
We are grateful to Winfried Just and German A. Enciso for helpful discussions
on Boolean networks and for the gadget in the proof of Lemma~\ref{lem:universal-finiteness-pspace-hard}.

\bibliographystyle{plain}
\bibliography{biblio} 



\end{document}